\theoremstyle{plain}
\newtheorem{theorem}{Theorem}[section]
\newtheorem{proposition}[theorem]{Proposition}
\newtheorem{corollary}[theorem]{Corollary}
\newtheorem{lemma}{Lemma}[section]
\theoremstyle{definition}
\newtheorem{example}[theorem]{Example}
\renewcommand{\algocf@captiontext}[2]{#1\algocf@typo. \AlCapFnt{}#2} 
\def\@algocf@capt@plain{top}
\renewcommand{\algocf@makecaption}[2]{%
  \addtolength{\hsize}{\algomargin}%
  \sbox\@tempboxa{\algocf@captiontext{#1}{#2}}%
  \ifdim\wd\@tempboxa >\hsize
    \hskip .5\algomargin%
    \parbox[t]{\hsize}{\algocf@captiontext{#1}{#2}}
  \else%
    \global\@minipagefalse%
    \hbox to\hsize{\box\@tempboxa}
  \fi%
  \addtolength{\hsize}{-\algomargin}%
}
\def\T{{ \mathrm{\scriptscriptstyle T} }}
\newcommand{\bs}[1]{\boldsymbol{#1}}
\newcommand{\mc}[1]{\mathcal{#1}}
\def\m{\mathcal}
\newcommand{\ms}[1]{\mathscr{#1}}
\newcommand{\bb}[1]{\mathbb{#1}}
\newcommand{\pr}{\textnormal{pr}}
\newcommand{\Mult}[2]{\textnormal{Multinomial}\left(#1,#2 \right)}
\newcommand{\Diri}[1]{\mc{D}(#1)}
\newcommand{\mb}[1]{\mathbb{#1}}
\newcommand{\tr}{\mbox{tr}}
\def\dmid{ \ \vert \vert \ }
\def\wt{\widetilde}
\def\m{\mathcal}
\def\mb{\mathbb}
\def\mr{\mathrm}
\def\T{{ \mathrm{\scriptscriptstyle T} }}
\newcommand{\vect}[1]{\mbox{vec}(#1)}
\newcommand{\No}[2]{\mbox{Normal}\left( #1, #2 \right)}
\def\T{{ \mathrm{\scriptscriptstyle T} }}
\begin{document}
\allowdisplaybreaks



\pagestyle{fancy}
\fancyhead[RO,LE]{\small\thepage}
\fancyhead[LO]{J. E. Johndrow and A. Bhattacharya}
\fancyhead[RE]{Optimal credible regions for Bayesian log-linear models}

\title{Optimal Gaussian approximations to the posterior for log-linear models with Diaconis--Ylvisaker priors}
\author{James E. Johndrow\thanks{James Johndrow gratefully acknowledges support from grant ES020619 from the National Institute of Environmental Health Sciences (NIEHS) of the U.S. National Institutes of Health.} \\ Department of Statistical Science, Duke University \\ Durham, North Carolina, USA \\ \texttt{jj@stat.duke.edu} \\ \and Anirban Bhattacharya\thanks{Dr. Bhattacharya acknowledges support for this project from the Office of Naval Research.} \\ Department of Statistics, Texas A\&M University, \\ College Station, Texas, USA \\\texttt{anirbanb@stat.tamu.edu} }


\maketitle

\begin{abstract}
In contingency table analysis, sparse data is frequently encountered for even modest numbers of variables, resulting in non-existence of maximum likelihood estimates. A common solution is to obtain regularized estimates of the parameters of a log-linear model. Bayesian methods provide a coherent approach to regularization, but are often computationally intensive. Conjugate priors ease computational demands, but the conjugate Diaconis--Ylvisaker priors for the parameters of log-linear models do not give rise to closed form credible regions, complicating posterior inference. Here we derive the optimal Gaussian approximation to the posterior for log-linear models with Diaconis--Ylvisaker priors, and provide convergence rate and finite-sample bounds for the Kullback-Leibler divergence between the exact posterior and the optimal Gaussian approximation. We demonstrate empirically in simulations and a real data application that the approximation is highly accurate, even in relatively small samples. The proposed approximation provides a computationally scalable and principled approach to regularized estimation and approximate Bayesian inference for log-linear models.  
\end{abstract}

\begin{keywords}
credible region; conjugate prior; contingency table; Dirichet--Multinomial; Kullback--Leibler divergence; Laplace approximaton.
\end{keywords}

\section{Introduction} \label{sec:intro}
Contingency table analysis routinely relies on log-linear models, which represent the logarithm of cell probabilities as an additive model \citep{agresti2002categorical}. With the standard choice of Multinomial or Poisson likelihood, these are exponential family models, and are routinely fit through maximum likelihood estimation \citep{fienberg2007three}. However, sparsity in the observed cell counts often makes maximum likelihood estimation infeasible (see \cite{haberman1974log} and \cite{bishop2007discrete}) in practical applications. In such cases, regularization is often used to obtain unique parameter estimates \citep{park2007l1,zou2005regularization}. 

A common Bayesian approach to inference in high-dimensional contingency tables is to place a conjugate prior on the parameters of a graphical or hierarchical log-linear model, and an independent prior over the space of all such models (see e.g. \cite{massam2009conjugate}). This leads to a standard model-averaged posterior \citep{hoeting1998bayesian}, where all possible sparse log-linear models in the chosen class are weighted by their posterior evidence. Use of non-conjugate (e.g. Gaussian) priors with computation by Markov chain Monte Carlo \citep{gelfand1990sampling} has also been proposed \citep{dellaportas1999markov}. Although model averaging is generally considered ideal in high dimensional settings, computational algorithms for posterior inference scale exceedingly poorly in $p$. Since the smallest contingency table corresponding to cross-classification of $p$ categorical variables has $2^p$ cells, the corresponding log-linear model has $2^p-1$ free parameters, so the model space grows super-exponentially in $p$. Accordingly, posterior computation is essentially infeasible for $p > 15$, the largest case demonstrated to date in the literature \citep{dobra2010mode} to the best of our knowledge.

Alternatively, one can place a Gaussian prior on the parameters of a saturated log-linear model to induce Tikhonov type regularization, and then perform computation by Markov chain Monte Carlo. This approach is well-suited to situations in which the sample size is not tiny relative to the table dimension, but where zero counts nonetheless exist in some cells. In this case, data augmentation Gibbs samplers such as that proposed by \cite{polson2013bayesian} provide for conditionally conjugate updates. However, this by itself is computationally intensive relative to alternatives such as elastic net \citep{zou2005regularization}, and can suffer from poor mixing. In principle, a more scalable Bayesian approach for producing Tikhonov regularized point estimates would be to utilize the Diaconis--Ylvisaker conjugate prior \citep{diaconis1979conjugate} on the parameters of the log-linear model, which is essentially computation free. The main drawback is that the resulting posterior distribution is difficult to work with, lacking closed form expressions for even marginal credible intervals or fast algorithms for sampling from the posterior. An accurate and more tractable approximation to this posterior is therefore of practical interest.

Approximations to the posterior distribution have a long history in Bayesian statistics, with the Laplace approximation perhaps the most common and simple alternative \citep{tierney1986accurate,shun1995laplace}. More sophisticated approximations, such as those obtained using variational methods \citep{attias1999inferring} may in some cases be more accurate but require computation similar to that for generic EM algorithms. Moreover, there exist no theoretical guarantees of the approximation error in finite samples, and these approximations are known to be inadequate in relatively simple models \citep{wang2004lack,wang2005inadequacy}. 

In this article, we propose a Gaussian approximation to the posterior for log-linear models with Diaconis--Ylvisaker priors. The approximation is shown to be the optimal Gaussian approximation to the posterior in the Kullback--Leibler divergence, and convergence rates to the exact posterior and a finite-sample Kullback--Leibler error bound are provided. The approximation is shown empirically to be accurate even for modest sample sizes; effectively, the empirical results suggest that the approximation is accurate enough to be used in place of the exact posterior within the range of sample sizes for which the posterior is sufficiently concentrated to be statistically useful. We also show how the approximation can be used to perform model selection using the penalized credible region method \citep{bondell2012consistent}. In a real data application, the method performs favorably in model selection for graphical log-linear models compared to methods requiring vastly greater computational resources.

\section{Background} \label{sec:background}
We first provide a brief review of exponential families. We then describe the family of conjugate priors for the natural parameter of an exponential family, referred to as Diaconis--Ylvisaker priors. We then provide more detailed background on log-linear models for Multinomial likelihoods and the associated Diaconis--Ylvisaker prior. 

\subsection{Exponential families}
Following \cite{diaconis1979conjugate}, let $\mu$ be a $\sigma$-finite measure defined on $(\bb R^p,\mc B)$, where $\mc B$ denotes all Borel sets on $\bb R^p$. Let $\mbox{supp}(\mu) = \{ y \in \bb R^p : d \mu(y) > 0\}$ be the support of $\mu$, and define $\ms Y$ as the interior of the convex hull of $\mbox{supp}(\mu)$. For $\theta \in \bb R^p$, define $M(\theta) = \log \int_{\ms Y} e^{\theta^T y} d\mu(y)$, and let $\Theta  = \{\theta \in \bb R^p : M(\theta) < \infty\}$, which we assume is an open set. We refer to $\Theta$ as the natural parameter space. The exponential family of probability measures $\{P(\cdot;\theta)\}$ indexed by a parameter $\theta \in \Theta$ is defined by
\begin{align}
dP(y;\theta) = e^{\theta^T y - M(\theta)} d\mu(y), \qquad \theta \in \Theta.
\end{align}
This family includes many of the probability distributions commonly used as sampling models in likelihood-based statistics. \cite{diaconis1979conjugate} develop the family of conjugate priors for the parameter $\theta$ of regular exponential family likelihoods. These Diaconis--Ylvisaker priors are given by
\begin{align}
d \pi(\theta;n_0,y_0) = e^{n_0 y_0^T\theta - n_0 M(\theta)}, \qquad n_0 \in \bb R, y_0 \in \bb R^d.  
\end{align}
On observing data $y$ consisting of $n$ observations with sufficient statistics $\bar{y}$, the posterior is then also Diaconis--Ylvisaker, with parameters $n_0 + n, y_0 + \bar{y}$, i.e. $d \pi(\theta \mid y) = d \pi(\theta;n_0 + n,y_0 + \bar{y})$. In the sequel we focus on one member of the exponential family, the multinomial. In the natural parametrization, the ultinomial likelihood gives rise to the log-linear model and the closely related multinomial logit model, which we now describe.

\subsection{Log-linear models}

Let $\mc S^d = \{(x_1, \ldots, x_d) \in [0, 1]^d : \sum_{j=1}^d x_j \le 1\}$ denote the $d$-dimensional unit simplex. Consider $N$ independent samples from a categorical variable with $(d+1)$ levels. We denote the levels of the variable by $0, 1, \ldots d$, without loss of generality. Let $y_j$ denote the number of times the $j$th level is observed in the $N$ samples and set $y = (y_0, y_1, \ldots, y_d)^{\T}$; clearly $\sum_{j=0}^d y_j = N$. The joint distribution of $y$ is given by a multinomial distribution, denoted $y \sim \Mult{N}{\pi}$, which is parametrized by $\pi = (\pi_1, \ldots, \pi_d)^{\T} \in \mc S^d$, where $\pi_j$ is the probability of observing the $j$th level for $j = 1, \ldots, d$. 

The log-linear model is a generalized linear model for multinomial likelihoods obtained by choosing the logistic link function, which also results in the natural exponential family parametrization. Define the logistic transformation $\ell : \bb R^d \to \mc S^d$ and its inverse log ratio transformation $\ell^{-1}: \m S^d \to \mb R^d$ as\begin{align}\label{eq:logistic_trans}
\pi_j = \frac{e^{\theta_j}}{1 + \sum_{l=1}^d e^{\theta_l}}, \quad \theta_j = \log(\pi_j/\pi_0), \quad (j = 1, \ldots, d),
\end{align}
where $\pi_0 = 1 - \sum_{j=1}^d \pi_j$, and $\theta_0 = 0$. We shall write $\pi = \ell(\theta)$ and $\theta = \ell^{-1}(\pi) = \log(\pi/\pi_0)$, respectively, to denote the transformations in \eqref{eq:logistic_trans}. Using \eqref{eq:logistic_trans}, the multinomial likelihood in the log-linear parameterization can be expressed as 
\begin{align}\label{eq:mult_lik}
f(y \mid \theta) ~\propto~ \frac{\exp\big(\sum_{j=1}^d y_j \theta_j \big)}{\big(1 + \sum_{l=1}^d e^{\theta_l}\big)^N}.
\end{align}

An important motivating case is when $y = \vect{\mathbf{n}}$, with $\mathbf{n}$ a contingency table arising from cross-classification of $N$ independent observations on $p$ categorical variables $y_1, \ldots, y_p$. Suppose that the $v$th variable $y_v$ has $d_v$ many levels, so that the contingency table has $\prod_{v=1}^p d_v$ many {\em cells}, and $y$ is a $(d+1)$-dimensional vector of counts with $d = \prod_{v=1}^p d_v - 1$. We refer to the parametrization $\theta = \log (\pi/\pi_0)$ in the contingency table setting as the {\em identity} parametrization. Also of particular interest in this setting are reparametrizations of \eqref{eq:logistic_trans} that represent $\log \pi/\pi_0$ as an additive model involving parameters that correspond to interactions among $y_1,\ldots,y_p$. Every identified parametrization of the log-linear model for the multinomial likelihood can be represented by 
\begin{align}
\log(\pi/\pi_0) = X \theta^*, \label{eq:llmX}
\end{align}
where $X$ is a $d$ by $d$ non-singular binary matrix and $\theta^* \in \bb R^d$. In the simulations and application, we make a specific choice for $X$ that corresponds to the \emph{corner parametrization} of the log-linear model \citep{massam2009conjugate}. We illustrate the identity and corner parameterizations through a $2^3$ contingency table in Example \ref{ex:3_way} below. 
Details for the general case can be found in the Appendix. 
\begin{example}\label{ex:3_way}
Consider three binary variables $y_1, y_2, y_3$, with $y_v \in \{0, 1\}$ for $v = 1, 2, 3$, and let
$$
\psi_{i_1 i_2 i_3} = \mbox{pr}(y_1 = i_1, y_2 = i_2, y_3 = i_3), \quad (i_1, i_2, i_3) \in \{0, 1\}^3.
$$
A $2^3$ contingency table $\mathbf{n} = (n_{i_1 i_2 i_3})$ is obtained from the cross-classification of $N$ independent observations on $y_1, y_2, y_3$, with $n_{i_1 i_2 i_3}$ denoting the cell count for the cell $(i_1, i_2, i_3)$. Let $y = \vect{\mathbf{n}} = (n_{000}, \ldots, n_{111})^{\T}$ be the vectorized cell counts with $d = 7$. 
In the {\em identity} parametrization, the vector of log-linear parameters $\theta \in \bb R^7$ is given by 
\begin{align*}
\left( \begin{array}{c} \theta_1 \\ \theta_2 \\ \theta_3 \\ \theta_4 \\ \theta_5 \\ \theta_6 \\ \theta_7 \end{array} \right) = 
\log \left( \begin{array}{c} \pi_1/\pi_0 \\ \pi_2/\pi_0 \\ \pi_3/\pi_0 \\ \pi_4/\pi_0 \\ \pi_5/\pi_0\\ \pi_6/\pi_0 \\ \pi_7/\pi_0 \end{array} \right) = 
\log \left( \begin{array}{c} \psi_{001}/\psi_{000} \\ \psi_{010}/\psi_{000} \\ \psi_{011}/\psi_{000} \\ \psi_{100}/\psi_{000} \\ \psi_{101}/\psi_{000} \\ \psi_{110}/\psi_{000} \\ \psi_{111}/\psi_{000} \end{array} \right).
\end{align*}
On the other hand, in the {\em corner} parametrization, we express
\begin{align*}
\theta = \log \left( \begin{array}{c} \psi_{001}/\psi_{000} \\ \psi_{010}/\psi_{000} \\ \psi_{011}/\psi_{000} \\ \psi_{100}/\psi_{000} \\ \psi_{101}/\psi_{000} \\ \psi_{110}/\psi_{000} \\ \psi_{111}/\psi_{000} \end{array} \right) &= \left( \begin{array}{c} \theta^*_{001} \\ \theta^*_{010} \\ \theta^*_{001} + \theta^*_{010} + \theta^*_{011} \\ \theta^*_{100} \\ \theta^*_{001} + \theta^*_{100} + \theta^*_{101} \\ \theta^*_{010} + \theta^*_{100} + \theta^*_{110} \\ \theta^*_{001} + \theta^*_{010} + \theta^*_{100} + \theta^*_{011} + \theta^*_{101} + \theta^*_{110} + \theta^*_{111} \end{array} \right) \\
 &= \left( \begin{array}{ccccccc} 1 & 0 & 0 & 0 & 0 & 0 & 0 \\ 0 & 1 & 0 & 0 & 0 & 0 & 0 \\ 1 & 1 & 1 & 0 & 0 & 0 & 0 \\ 0 & 0 & 0 & 1 & 0 & 0 & 0 \\ 1 & 0 & 0 & 1 & 1 & 0 & 0 \\ 0 & 1 & 0 & 1 & 0 & 1 & 0 \\ 1 & 1 & 1 & 1 & 1 & 1 & 1 \end{array} \right) \times \left( \begin{array}{c} \theta^*_{001} \\ \theta^*_{010} \\ \theta^*_{011} \\ \theta^*_{100} \\ \theta^*_{101} \\ \theta^*_{110} \\ \theta^*_{111} \end{array} \right) \\
 &= X \theta^*.
\end{align*}
The indexing of the elements of $\theta^*$ by binary indices is for ease of interpretation. Indeed, entries of $\theta^*$ with a single $1$ in the binary index are main effects, those with two $1$'s are two-way interactions and $\theta^*_{111} $ is a three-way interaction term. The matrix $X$ can be easily verified to be non-singular, so that the $\theta$ and $\theta^*$ parametrizations are equivalent, with $d = 7$ free parameters in either case. 
\end{example}

\subsection{Conjugate priors for log-linear models}

We now present the Diaconis--Ylvisaker prior for the multinomial likelihood \eqref{eq:mult_lik} and derive an optimal Gaussian approximation to the corresponding posterior in Kullback--Leibler divergence. Extensions to log-linear models with a non-identity parametrization (i.e., $X \ne \mathrm{I}_d$ in \eqref{eq:llmX}) is straightforward by invariance properties of the Kullback--Leibler divergence and are discussed subsequently. All proofs are deferred to the Appendix.

For the multinomial likelihood \eqref{eq:mult_lik}, the Diaconis--Ylvisaker prior is obtained by applying the inverse logistic transformation $\ell^{-1}$ to a Dirichlet distribution, which not surprisingly is the conjugate prior for $\pi$. Recall that $\pi_0 = 1 - \sum_{j=1}^d \pi_j$. The Dirichlet distribution $\Diri{\alpha}$ on $\m S^d$ with parameter vector $\alpha = (\alpha_0, \alpha_1, \ldots, \alpha_d)^{\T}$ has density
\begin{align}\label{eq:dir_dens}
q(\pi; \alpha) = \frac{\Gamma(\sum_{j=0}^d \alpha_j)}{\prod_{j=0}^d \Gamma(\alpha_j)} \, \prod_{j=0}^d \pi_j^{\alpha_j - 1}, \quad \pi \in \m S^d,
\end{align}
and corresponding probability measure $\mc{Q}(\cdot, \alpha)$ with $\mc Q(A, \alpha) = \int_A q(\pi; \alpha) d \pi$ for Borel subsets $A$ of $\m S^d$. 
\begin{proposition} \label{prop:diridy}
Suppose $\pi \sim \Diri{\alpha}$ and let $\theta = \log(\pi/\pi_0) \in \bb R^d$. Define  $A = \sum_{j=0}^d \alpha_j$. Then $\theta$ has a density on $\bb R^d$ given by 
\begin{align}\label{eq:dyprior}
p(\theta; \alpha) = \frac{\Gamma(\sum_{j=0}^d \alpha_j)}{\prod_{j=0}^d \Gamma(\alpha_j)} \, \frac{\exp( \sum_{j=1}^d \alpha_j \theta_j )}{ (1  + \sum_{l=1}^d e^{\theta_l} )^{A} }.
\end{align} 
\end{proposition}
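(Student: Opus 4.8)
The plan is to obtain $p(\theta;\alpha)$ by a change of variables from the Dirichlet density \eqref{eq:dir_dens} under the smooth bijection $\ell:\bb R^d \to \m S^d$ of \eqref{eq:logistic_trans}, whose inverse is $\theta = \ell^{-1}(\pi) = \log(\pi/\pi_0)$. Writing $S = 1 + \sum_{l=1}^d e^{\theta_l}$, the inverse map gives $\pi_j = e^{\theta_j}/S$ for $j = 1, \ldots, d$ and $\pi_0 = 1/S$, so the density of $\theta$ is $p(\theta;\alpha) = q(\ell(\theta);\alpha)\,\lvert \det(\partial \pi/\partial \theta)\rvert$. The task therefore splits into rewriting the Dirichlet kernel $\prod_{j=0}^d \pi_j^{\alpha_j-1}$ in terms of $\theta$ and computing the Jacobian determinant.

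For the kernel, substituting $\pi_j = e^{\theta_j}/S$ (with the convention $\theta_0 = 0$) yields $\prod_{j=0}^d \pi_j^{\alpha_j - 1} = \exp\big(\sum_{j=1}^d \theta_j(\alpha_j-1)\big)\,S^{-(A-d-1)}$, since $\sum_{j=0}^d (\alpha_j - 1) = A - (d+1)$ and the $j=0$ factor contributes nothing to the exponent because $\theta_0 = 0$.

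The main step is the Jacobian, which I expect to be the only point requiring care. Differentiating $\pi_i = e^{\theta_i}/S$ gives $\partial \pi_i/\partial \theta_j = \pi_i(\delta_{ij} - \pi_j)$, so the $d \times d$ matrix $\partial \pi/\partial \theta$ equals $D - \pi\pi^{\T}$ with $D = \mbox{diag}(\pi_1, \ldots, \pi_d)$ and $\pi = (\pi_1, \ldots, \pi_d)^{\T}$. By the matrix determinant lemma, $\det(D - \pi\pi^{\T}) = \det(D)\,(1 - \pi^{\T} D^{-1} \pi) = \big(\prod_{j=1}^d \pi_j\big)\big(1 - \sum_{j=1}^d \pi_j\big) = \pi_0 \prod_{j=1}^d \pi_j = \prod_{j=0}^d \pi_j$, using $1 - \sum_{j=1}^d \pi_j = \pi_0$. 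In $\theta$-coordinates this Jacobian is $\prod_{j=0}^d \pi_j = \exp\big(\sum_{j=1}^d \theta_j\big)\,S^{-(d+1)}$, and since every $\pi_j > 0$ it is positive, so the absolute value is immaterial.

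Finally I would multiply the two pieces together with the normalizing constant $\Gamma(A)/\prod_{j=0}^d \Gamma(\alpha_j)$. The exponential factors combine to $\exp\big(\sum_{j=1}^d \theta_j(\alpha_j - 1) + \sum_{j=1}^d \theta_j\big) = \exp\big(\sum_{j=1}^d \alpha_j \theta_j\big)$, and the powers of $S$ combine as $S^{-(A-d-1)} \cdot S^{-(d+1)} = S^{-A}$, recovering \eqref{eq:dyprior}. The rank-one update structure $D - \pi\pi^{\T}$ is what makes the determinant clean, and careful bookkeeping of the exponent of $S$ is what ensures the two $S$-powers telescope to exactly $-A$.
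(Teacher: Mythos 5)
Your proof is correct and takes essentially the same route as the paper's: a change of variables from the Dirichlet density, with the Jacobian determinant computed via a rank-one-update (matrix determinant lemma) formula. The only cosmetic difference is that you differentiate the forward map $\pi = \ell(\theta)$ and get $\det(D - \pi\pi^{\T}) = \prod_{j=0}^d \pi_j$ directly, whereas the paper differentiates the inverse map $\theta = \ell^{-1}(\pi)$ and takes the reciprocal of its determinant; the two computations are equivalent, and your exponent bookkeeping for the powers of $S$ matches the paper's conclusion.
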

We write $\theta \sim \mc{LD}(\alpha)$ and use $\mc P(\cdot;\alpha)$ to denote the probability measure associated with the density \eqref{eq:dyprior}, with $\mc P(B;\alpha) = \int_B p(\theta;\alpha) d\theta$ for Borel subsets $B$ of $\mb R^d$. If a non-identity parametrization $\theta = X \theta^*$ as in \eqref{eq:llmX} is employed, then we denote the induced distribution on $\theta^* = X^{-1} \theta$ by $\mc P_X(\cdot; \alpha)$ and the density by $p_X(\theta; \alpha)$.

It is immediate that $\mc{LD}(\alpha)$ is a conjugate family of prior distributions for the likelihood \eqref{eq:mult_lik}, with the posterior $\theta \mid y \sim \mc{LD}(\alpha + y)$. To obtain some preliminary insight into the distribution family $\mc{LD}(\alpha)$, we derive the mean and covariance in Proposition 2 below.  
\begin{proposition}\label{prop:meancov}
Let $\theta \sim \mc{LD}(\beta)$, with $\beta = (\beta_0, \beta_1, \ldots, \beta_d)^{\T}$ and $\beta_j > 0$ for all $j$. Then, 
\begin{align*}
E(\theta_j) &= \psi(\beta_j) - \psi(\beta_0), \quad (j = 1, \ldots, d) \\
\mathrm{cov}(\theta_j, \theta_{j'}) &= \psi'(\beta_j) \delta_{jj'} + \psi'(\beta_0), \quad (j, j' = 1, \ldots, d)
\end{align*}
where $\psi$ and $\psi'$ are the digamma and trigamma functions, respectively, and $\delta_{jj'}=0$ if $j \ne j'$ and $\delta_{jj'} = 1$ otherwise.
\end{proposition}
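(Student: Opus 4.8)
The plan is to reduce the computation to the well-known logarithmic moments of a Dirichlet random vector and then exploit linearity of expectation and bilinearity of covariance. By the construction underlying Proposition \ref{prop:diridy}, if $\theta \sim \mc{LD}(\beta)$ then $\theta_j = \log \pi_j - \log \pi_0$ with $\pi \sim \Diri{\beta}$ on $\m S^d$. Hence it suffices to determine $E[\log \pi_k]$ and $\mathrm{cov}(\log \pi_k, \log \pi_l)$ for all $k, l \in \{0, 1, \ldots, d\}$, after which $E(\theta_j)$ and $\mathrm{cov}(\theta_j, \theta_{j'})$ are assembled from these by taking differences.

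To obtain the logarithmic moments I would treat the Dirichlet density \eqref{eq:dir_dens} as a regular exponential family in the natural parameter $\beta$, with sufficient statistics $\log \pi_0, \ldots, \log \pi_d$ and log-partition function $\Lambda(\beta) = \sum_{j=0}^d \log \Gamma(\beta_j) - \log \Gamma(A)$, where $A = \sum_{j=0}^d \beta_j$. The normalization identity $\int_{\m S^d} \prod_{j=0}^d \pi_j^{\beta_j - 1} \, d\pi = e^{\Lambda(\beta)}$, with $\pi_0 = 1 - \sum_{j \ge 1} \pi_j$ taken as the dependent coordinate, is the multivariate Beta integral. Differentiating this identity once and twice under the integral sign and dividing through by $e^{\Lambda(\beta)}$ gives $E[\log \pi_k] = \partial_{\beta_k} \Lambda(\beta) = \psi(\beta_k) - \psi(A)$ and $\mathrm{cov}(\log \pi_k, \log \pi_l) = \partial^2_{\beta_k \beta_l} \Lambda(\beta) = \psi'(\beta_k) \delta_{kl} - \psi'(A)$, using $\partial_{\beta_k} \log \Gamma(\beta_k) = \psi(\beta_k)$, $\partial_{\beta_k} \log \Gamma(A) = \psi(A)$, and the corresponding trigamma identities.

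It then remains to combine these linearly. For the mean, $E(\theta_j) = E[\log \pi_j] - E[\log \pi_0] = (\psi(\beta_j) - \psi(A)) - (\psi(\beta_0) - \psi(A)) = \psi(\beta_j) - \psi(\beta_0)$, so the $\psi(A)$ terms cancel. For the covariance, expanding $\theta_j = \log \pi_j - \log \pi_0$ and $\theta_{j'} = \log \pi_{j'} - \log \pi_0$ produces four covariance terms; substituting the formulas above, every occurrence of $\psi'(A)$ cancels, leaving $\psi'(\beta_j) + \psi'(\beta_0)$ when $j = j'$ and $\psi'(\beta_0)$ when $j \ne j'$, which is exactly $\psi'(\beta_j) \delta_{jj'} + \psi'(\beta_0)$.

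The only nonroutine point is justifying the interchange of differentiation and integration. This is standard for a regular exponential family: on the open natural parameter space $\{\beta : \beta_j > 0\}$ the partition function is analytic and dominated-convergence bounds hold uniformly on compact neighborhoods, so the first two derivatives of $\Lambda$ coincide with the mean and covariance of the sufficient statistics, with both finite for $\beta_j > 0$. I therefore expect this regularity check, together with the bookkeeping that $\pi_0$ is the constrained coordinate so that differentiation in $\beta_0$ is legitimate, to be the main (albeit routine) obstacle, the remaining algebra being immediate.
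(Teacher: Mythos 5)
Your proof is correct and takes essentially the same route as the paper's: the paper establishes Proposition~\ref{prop:meancov} inside the proof of Theorem~\ref{thm:KL} by writing $\theta_j = \log\pi_j - \log\pi_0$ for Dirichlet-distributed $\pi$, quoting the log-moment identities $E\log\pi_k = \psi(\beta_k)-\psi(A)$ and $\mathrm{cov}(\log\pi_k,\log\pi_l)=\psi'(\beta_k)\delta_{kl}-\psi'(A)$ as ``standard properties of the Dirichlet distribution or Exponential family differential identities,'' and then performing the same cancellation of the $\psi(A)$ and $\psi'(A)$ terms. Your only addition is to derive those standard identities explicitly by differentiating the Dirichlet log-partition function, which is precisely the justification the paper invokes.
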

The proof of Proposition \ref{prop:meancov} is established within the proof of Theorem \ref{thm:KL} in the Appendix. Assume the data $y$ is generated from a $\Mult{N}{\pi^0}$ distribution and let $\theta^0 = \log (\pi^0/\pi^0_0 )$ be the true log-linear parameter, where $\pi_0^0 = 1 - \sum_{j=1}^d \pi_j^0$. If a $\mc{LD}(\alpha)$ prior is placed on $\theta$, one can use Proposition \ref{prop:meancov} to show that the posterior mean $E(\theta \mid y)$ converges almost surely to $\theta^0$ with increasing sample size, and the posterior covariance $\mbox{cov}(\theta \mid y)$ converges to the inverse Fisher information matrix as long as the entries of the prior hyperparameter $\alpha$ are suitably bounded. In fact, a Bernstein--von Mises type result can be established, showing that the posterior distribution approaches a Gaussian distribution, centered at the true parameter value and having covariance the inverse Fisher information matrix, in the total variation metric. We do not pursue such frequentist asymptotic validations further in this paper. Our goal rather is to provide a Gaussian approximation to the posterior distribution that can be used in practice, and provide finite sample bounds to the approximation error. 

\section{Main results} \label{sec:results}

In this section, we provide an optimal Gaussian approximation to a $\mc{LD}(\beta)$ distribution \eqref{eq:dyprior} in the Kullback--Leibler divergence, i.e., we exhibit a vector $\mu^* \in \bb R^d$ and a positive definite matrix $\Sigma^*$ such that the Kullback--Leibler divergence between $\mc{LD}(\beta)$ and $\mc N(\mu^*, \Sigma^*)$ is the minimum among all Gaussian distributions. This result provides a readily available Gaussian approximation to the posterior distribution $\mc{LD}(\beta = \alpha + y)$ of the log-linear parameter $\theta$ in \eqref{eq:mult_lik} with a Diaconis--Ylvisaker prior $\mc{LD}(\alpha)$. We also provide a non-asymptotic error bound for the Kullback--Leibler approximation. Using Pinsker's inequality, the approximation error in the total variation distance can be bounded in finite samples. 

For two probability measures $\nu \ll \nu^*$, we write
\begin{align*}
 D(\nu \dmid \nu^*) = E_{\nu^*} \log d\nu/d\nu^* 
\end{align*}
to denote the Kullback--Leibler divergence between $\nu$ and $\nu^*$. 
\begin{theorem} \label{thm:KL}
Given $\beta_j > 0,  j = 0, 1, \ldots, d$, let $\beta = (\beta_0, \ldots, \beta_d)^{\T}$, and define
\begin{align}\label{eq:min_mus}
\mu_j^* = \psi(\beta_j) - \psi(\beta_0), \quad \sigma_{jj'}^* = \psi'(\beta_j) \delta_{jj'} + \psi'(\beta_0),
\end{align}
where $\psi$ and $\psi'$ denote the digamma and trigamma functions respectively. Define $\mu^* = (\mu_j^*) \in \mb R^d$ and $\Sigma^* = (\sigma_{jj'}^*) \in \mb R^{d \times d}$. Then, 
\begin{align}
D \bigg\{ \mc{LD}(\beta) \dmid \mc{N}(\mu^*, \Sigma^*) \bigg\} = \inf_{\mu, \Sigma} D \bigg\{ \mc{LD}(\beta) \dmid \mc{N}(\mu, \Sigma) \bigg\},
\end{align}
where the infimum is over all $\mu \in \mb R^d$ and all $\Sigma \succ 0 \in \mb R^{d \times d}$. Further, if $\beta_j > 1/2$ for all $j = 0, 1, \ldots, d$, then
\begin{align}
D \bigg\{ \mc{LD}(\beta) \dmid \mc{N}(\mu^*, \Sigma^*) \bigg\} < \frac{1}{2} \sum_{j=0}^d \frac{1}{\beta_j} + \frac{1}{6 B}, \label{eq:klrate}
\end{align}
where $B = \sum_{j=0}^d \beta_j$.
\end{theorem}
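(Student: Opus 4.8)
The plan is to split the theorem into its two assertions: optimality of the moment-matched Gaussian, and the quantitative bound \eqref{eq:klrate}. For the first, fix the target $\mc{LD}(\beta)$ and let $\phi_{\mu,\Sigma}$ be the density of $\mc N(\mu,\Sigma)$. Writing the divergence in the standard orientation, $D\{\mc{LD}(\beta)\dmid\mc N(\mu,\Sigma)\} = -H(\mc{LD}(\beta)) - \mathbb E_\beta[\log\phi_{\mu,\Sigma}(\theta)]$, where $H$ is differential entropy and $\mathbb E_\beta$ denotes expectation under $\mc{LD}(\beta)$ (all moments are finite since the digamma and trigamma values at $\beta_j>0$ are finite). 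Only the second term depends on $(\mu,\Sigma)$, and the bias--variance identity $\mathbb E_\beta[(\theta-\mu)^\T\Sigma^{-1}(\theta-\mu)] = \tr(\Sigma^{-1}\mathrm{cov}_\beta\theta) + (\mathbb E_\beta\theta-\mu)^\T\Sigma^{-1}(\mathbb E_\beta\theta-\mu)$ reduces maximization of $\mathbb E_\beta[\log\phi]$ to choosing $\mu=\mathbb E_\beta\theta$ and then minimizing $\log\det\Sigma + \tr(\Sigma^{-1}S)$ with $S=\mathrm{cov}_\beta\theta$; the convexity inequality $\log\det\Sigma + \tr(\Sigma^{-1}S)\ge \log\det S + d$, with equality iff $\Sigma=S$, pins the optimum to $(\mathbb E_\beta\theta,\,\mathrm{cov}_\beta\theta)$. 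To identify these with $(\mu^*,\Sigma^*)$ of \eqref{eq:min_mus} I would use the representation $\theta_j=\log G_j-\log G_0$ with $G_0,\dots,G_d$ independent and $G_j\sim\mathrm{Gamma}(\beta_j,1)$, so that $(G_0,\dots,G_d)/\sum_k G_k\sim\Diri{\beta}$; combined with $\mathbb E[\log G_j]=\psi(\beta_j)$ and $\mathrm{var}(\log G_j)=\psi'(\beta_j)$ this yields $\mu_j^*=\psi(\beta_j)-\psi(\beta_0)$ and $\sigma_{jj'}^*=\psi'(\beta_j)\delta_{jj'}+\psi'(\beta_0)$, establishing both Proposition \ref{prop:meancov} and the infimum claim.

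For the bound, the key observation is that at the optimum $\mathbb E_\beta[(\theta-\mu^*)^\T\Sigma^{*-1}(\theta-\mu^*)]=d$, so the minimized divergence collapses to the entropy gap $D\{\mc{LD}(\beta)\dmid\mc N(\mu^*,\Sigma^*)\} = H(\mc N(\mu^*,\Sigma^*)) - H(\mc{LD}(\beta))$. I would compute $H(\mc{LD}(\beta))$ in closed form directly from \eqref{eq:dyprior}: since $1+\sum_l e^{\theta_l}=\pi_0^{-1}$ we have $\log(1+\sum_l e^{\theta_l})=-\log\pi_0$ with $\mathbb E_\beta[-\log\pi_0]=\psi(B)-\psi(\beta_0)$, which gives the compact identity $H(\mc{LD}(\beta)) = h(B)-\sum_{j=0}^d h(\beta_j)$ where $h(x):=x\psi(x)-\log\Gamma(x)$. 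Using $H(\mc N(\mu^*,\Sigma^*))=\tfrac d2\log(2\pi e)+\tfrac12\log\det\Sigma^*$ together with the matrix-determinant-lemma evaluation $\det\Sigma^*=\big(\prod_{j=1}^d\psi'(\beta_j)\big)\big(1+\psi'(\beta_0)\sum_{j=1}^d\psi'(\beta_j)^{-1}\big)$ produces an exact formula for $D$ in terms of $\psi,\psi',\log\Gamma$ alone.

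Finally, to extract \eqref{eq:klrate} I would peel off the Stirling part $h_0(x)=x+\tfrac12\log x-\tfrac12\log(2\pi e)$ of $h$; the combination $\sum_j h_0(\beta_j)-h_0(B)$ telescopes against $\tfrac d2\log(2\pi e)$, leaving the exact expression $D = \tfrac12\log\det\Sigma^* + \tfrac12\sum_{j=0}^d\log\beta_j - \tfrac12\log B + \sum_{j=0}^d\rho(\beta_j)-\rho(B)$, where $\rho:=h-h_0$ obeys $\rho(x)=-\tfrac1{6x}+O(x^{-3})$. The two summands of the bound then arise from distinct sources: $-\rho(B)$ supplies the $\tfrac1{6B}$, so I would prove the one-sided estimate $\rho(x)\ge -\tfrac1{6x}$ on $(1/2,\infty)$, while the per-coordinate remainder --- the determinant term together with $\sum_j\rho(\beta_j)\le 0$ --- I would bound above by $\tfrac12\sum_{j=0}^d\beta_j^{-1}$ via $\log(\beta_j\psi'(\beta_j))\le \tfrac1{2\beta_j}+\cdots$ after controlling the rank-one correction factor by its normalization. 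The main obstacle is establishing these non-asymptotic, correctly-signed inequalities for $\psi$, $\psi'$ and $\log\Gamma$ uniformly on $(1/2,\infty)$ --- most cleanly through the Binet integral representation of $\log\Gamma$ and the complete monotonicity of $\psi'$ --- and, in particular, handling the coupled rank-one determinant, which does not factor across coordinates and is the only place the global quantity $B$ genuinely enters; confirming that the threshold $1/2$ is precisely what these estimates demand is the crux. As a sanity check on the order, the representation $\theta_j=\log G_j-\log G_0$ with the data-processing inequality gives $D\le\sum_{j=0}^d D\{\mathrm{LogGamma}(\beta_j)\dmid\mc N(\psi(\beta_j),\psi'(\beta_j))\}=O(\sum_j\beta_j^{-1})$, consistent with \eqref{eq:klrate}.
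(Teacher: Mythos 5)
Your proposal is correct, and it reaches both halves of the theorem by a route that differs from the paper's in organization though not in its essential ingredients. For the optimality claim, your convexity inequality $\log\det\Sigma + \tr(\Sigma^{-1}S) \ge \log\det S + d$ (equality iff $\Sigma = S$) is exactly the paper's Lemma \ref{lem:kl}; the paper applies it after transferring the problem to the simplex by KL invariance (Dirichlet versus logistic normal, which it advertises as its ``main idea''), whereas you stay in $\theta$-space and identify the optimal moments through the independent-Gamma representation $\theta_j = \log G_j - \log G_0$ rather than the Dirichlet log-moment identities \eqref{eq:dir_prop1}--\eqref{eq:dir_prop2}; both are standard and give the same $(\mu^*,\Sigma^*)$. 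For the bound, the paper bounds the four terms of \eqref{eq:min_kl_1} separately using the Stirling bound \eqref{eq:gamma_fn_bd}, the digamma bound \eqref{eq:digamma_bd1}, and the trigamma bounds of Lemma \ref{lem:trigamma_bd}, letting the leading terms cancel; you instead collapse the minimized divergence into the exact entropy-gap identity $D = \tfrac12\log\det\Sigma^* + \tfrac12\sum_{j}\log\beta_j - \tfrac12\log B + \sum_j \rho(\beta_j) - \rho(B)$ and then bound the signed remainders. This is the same computation performed in a different order, and your required estimates do hold with exactly the paper's ingredients: \eqref{eq:gamma_fn_bd} and \eqref{eq:digamma_bd1} yield $-1/(6x) < \rho(x) < 0$ for all $x>0$ (giving both $\sum_j\rho(\beta_j) < 0$ and $-\rho(B) < 1/(6B)$), the upper bound of Lemma \ref{lem:trigamma_bd} yields $\tfrac12\log\{\beta_j\psi'(\beta_j)\} < 1/(2\beta_j)$, and the lower bound $\psi'(z) > 1/z$ controls the rank-one determinant factor via $\sum_j 1/\psi'(\beta_j) < B$. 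What your version buys is transparency --- the exact closed form for $D$ (which the paper never states explicitly) shows precisely where each piece of \eqref{eq:klrate} originates --- and your data-processing remark is a genuinely independent, if constant-free, confirmation of the $\mc O(\sum_j \beta_j^{-1})$ rate. One small correction: the threshold $1/2$ is not ``precisely what these estimates demand,'' and worrying about its sharpness is unnecessary; the only inequality in either proof requiring any restriction is the trigamma upper bound, for which $\beta_j > 1/3$ suffices (Lemma \ref{lem:trigamma_bd}), and the paper states $\beta_j > 1/2$ merely as a clean sufficient condition interpretable as every category receiving at least one observation.
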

The matrix $\Sigma^*$ has a compound-symmetry structure and is therefore positive-definite. From Proposition \ref{prop:meancov}, the parameters of the optimal Gaussian approximation $\mu^*$ and $\Sigma^*$ are indeed the mean and covariance matrix of the $\mc{LD}(\beta)$ distribution. Equation \eqref{eq:klrate} provides an upper-bound to the approximation error. In the posterior, $\beta_j = \alpha_j + y_j$ and $B = \sum_{j=0}^d \alpha_j + N$. The condition $\beta_j \ge 1/2$ is therefore satisfied whenever every category has at least one observation. Since
\begin{align*}
\bb E_{y}[\alpha_j+y_j] = \alpha_j + N \pi_j^0,
\end{align*}
the approximation error is approximately in the order of $\sum_{j=0}^d (\pi_j^0 N)^{-1}$, where as before $\pi_j^0$ denotes the true probability of category $j$. In the best case where all the categories receive approximately equal probability, i.e., $\pi_j^0 \asymp (d+1)^{-1}$, the approximation error is $\mc O(d^2/N)$. However, the convergence rate in $N$ can be slower if some of the $\pi_j^0$s are very small. In other words, the higher the entropy of the data generating distribution, the worse the approximation is, although our simulations suggest that the approximation is practicable even for moderate sample sizes and unbalanced category probabilities. When one considers that the eigenvalues of the covariance matrix enter into the constant in Berry-Ess\'{e}en convergence rates, and that here the covariance of the data is given by $\mbox{diag}(\pi^0) - \pi^0 (\pi^0)^{\T}$, it appears that a similar phenomenon is at work here. 

The main idea behind our proof is to exploit the invariance of the Kullback--Leibler divergence under bijective transformations and transfer the domain of the problem from $\bb R^d$ to $\mc S^d$. Since an $\mc{LD}(\beta)$ distribution is obtained from a Dirichlet $\mc{D}(\beta)$ distribution via the inverse log-ratio transform $\ell^{-1}$, the problem of finding the best Gaussian approximation to $\mc{LD}(\beta)$ is equivalent to finding the best approximation to $\mc{D}(\beta)$ among a class of distributions obtained by applying the logistic transform to Gaussian random variables. If $\theta \sim N(\mu, \Sigma)$, the induced distribution on $\pi = \ell(\theta)$ is called a logistic normal distribution -- denoted $\mc L(\mu, \Sigma)$ -- and has density on $\m S^d$ given by 
\begin{align}\label{eq:ln_dens}
\wt{q}(\pi; \mu, \Sigma) = (2 \pi)^{-d/2} |\Sigma|^{-1/2} \bigg( \prod_{j=0}^d \pi_j \bigg)^{-1} \exp \left[ -\frac{1}{2} \{\log(\pi/\pi_0) - \mu \}^{\T} \Sigma^{-1} \{\log(\pi/\pi_0) - \mu \} \right].
\end{align}
The problem therefore boils down to calculating the Kullback--Leibler divergence between a Dirichlet density $q(\cdot; \beta)$ and a logistic normal density $\wt{q}(\cdot; \mu, \Sigma)$ and optimizing the expression with respect to $\mu$ and $\Sigma$. The details are deferred to the Appendix. 

Once the approximation is derived in the identity parametrization, we appeal to the invariance of the Kullback--Leibler divergence under one-to-one transformations to obtain the corresponding approximation in a non-identity parameterization $\theta = X \theta^*$ as in \eqref{eq:llmX} for any non-singular $X$. The result is stated below. 
\begin{corollary}\label{cor:KL}
 If $\theta \sim \mc{LD}(\beta)$ then
 \begin{align}
  D\left( \mc P_X(\cdot;\beta) \dmid \mc N(\cdot;X \mu^*,X^T \Sigma^* X) \right) = \inf_{\mu, \Sigma} D\left( \mc P_X(\cdot;\beta) \dmid \mc N(\cdot;\mu,\Sigma) \right)
 \end{align}
 for any full-rank $d$ by $d$ matrix $X$. Moreover, the bound on the KL divergence as a function of $\beta$ in (\ref{eq:klrate}) is attained for $ D\left(\mc P_X(\cdot;\beta) \dmid \mc N(\cdot;\mu^*,\Sigma^*)\right)$
\end{corollary}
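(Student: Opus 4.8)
The plan is to leverage the invariance of the Kullback--Leibler divergence under smooth bijective transformations, which reduces Corollary~\ref{cor:KL} to Theorem~\ref{thm:KL} with essentially no additional computation. First I would recall the key invariance fact: if $T: \bb R^d \to \bb R^d$ is a bijection and $\nu, \nu^*$ are probability measures on $\bb R^d$, then the pushforward measures $T_\# \nu$ and $T_\# \nu^*$ satisfy $D(T_\# \nu \dmid T_\# \nu^*) = D(\nu \dmid \nu^*)$. This holds because the Radon--Nikodym derivative transforms as $d(T_\# \nu)/d(T_\# \nu^*)(T(x)) = d\nu/d\nu^*(x)$, so the Jacobian factors cancel in the ratio and the expectation is unchanged under the induced change of variables. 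Here the relevant bijection is the linear map $T(\theta) = X^{-1}\theta$, or equivalently I work directly with the relation $\theta^* = X^{-1}\theta$.

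The second step is to identify the pushforwards explicitly. By definition $\mc P_X(\cdot;\beta)$ is the law of $\theta^* = X^{-1}\theta$ when $\theta \sim \mc{LD}(\beta)$, i.e. $\mc P_X(\cdot;\beta) = (X^{-1})_\# \mc P(\cdot;\beta)$. For the Gaussian side, I would use the standard fact that a linear image of a Gaussian is Gaussian: if $\theta \sim \mc N(\mu^*,\Sigma^*)$ then $X^{-1}\theta \sim \mc N(X^{-1}\mu^*, X^{-1}\Sigma^*(X^{-1})^{\T})$. A point requiring care is the target of the approximation claimed in the statement, namely $\mc N(\cdot; X\mu^*, X^{\T}\Sigma^* X)$: I would check that the intended optimal approximation is the pushforward under $X^{-1}$ of the optimal Gaussian in the identity parametrization, and reconcile the covariance expression $X^{\T}\Sigma^* X$ against $X^{-1}\Sigma^*(X^{-1})^{\T}$ by tracking whether the corollary parametrizes by $\theta = X\theta^*$ or $\theta^* = X\theta$; the transpose-versus-inverse discrepancy is purely a bookkeeping matter fixed by being consistent about which direction $X$ acts.

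The third step assembles the optimality claim. Applying invariance with $T = X^{-1}$ gives, for any candidate $\mc N(\mu,\Sigma)$ on the $\theta^*$-space,
\begin{align*}
D\left( \mc P_X(\cdot;\beta) \dmid \mc N(\cdot;\mu,\Sigma) \right) = D\left( \mc P(\cdot;\beta) \dmid \mc N(\cdot; X\mu, X\Sigma X^{\T}) \right),
\end{align*}
since the map $(\mu,\Sigma) \mapsto (X\mu, X\Sigma X^{\T})$ is a bijection from $\bb R^d \times \{\Sigma \succ 0\}$ onto itself (as $X$ is nonsingular, $X\Sigma X^{\T}$ ranges over all positive-definite matrices as $\Sigma$ does). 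Taking the infimum over $(\mu,\Sigma)$ on the left therefore equals the infimum over all Gaussians on the right, which by Theorem~\ref{thm:KL} is achieved at $(X\mu,X\Sigma X^{\T}) = (\mu^*,\Sigma^*)$, i.e. at $(\mu,\Sigma) = (X^{-1}\mu^*, X^{-1}\Sigma^*(X^{-1})^{\T})$. This establishes that the displayed Gaussian attains the infimum. Because the divergence value is literally unchanged by the bijection, the minimal value equals $D\{\mc{LD}(\beta) \dmid \mc N(\mu^*,\Sigma^*)\}$, so the finite-sample bound \eqref{eq:klrate} transfers verbatim, giving the final sentence of the corollary.

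I do not expect a genuine mathematical obstacle here, since invariance does all the work; the only real hazard is notational. The main thing to get right is the consistent convention for $X$ (the transpose in $X^{\T}\Sigma^* X$ must match the direction in which $X$ maps $\theta$ and $\theta^*$), and confirming that conjugation by a nonsingular $X$ is a surjection onto the positive-definite cone so that the two infima are over genuinely the same set. Once those are pinned down, the proof is a one-line application of the change-of-variables identity for the Kullback--Leibler divergence.
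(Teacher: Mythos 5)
Your proposal is correct and takes essentially the same route as the paper's own proof: invariance of the Kullback--Leibler divergence under the nonsingular linear map relating $\theta$ and $\theta^*$, the observation that this map carries the Gaussian family bijectively onto itself, and transfer of both the minimizer and the minimal value (hence the bound \eqref{eq:klrate}) from Theorem~\ref{thm:KL}. If anything you are more careful than the paper, since you explicitly verify that $(\mu,\Sigma)\mapsto(X\mu, X\Sigma X^{\T})$ is a surjection onto $\bb R^d\times\{\Sigma \succ 0\}$ so the two infima range over the same set, and you correctly flag (and resolve as pure bookkeeping) the paper's own inconsistency between its convention $\theta^* = X^{-1}\theta$ and the stated optimal covariance $X^{\T}\Sigma^* X$ rather than $X^{-1}\Sigma^*(X^{-1})^{\T}$.
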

Thus, the best Gaussian approximation to the posterior (in the Kullback--Leibler sense) under the Diaconis--Ylviaker prior is given by $N(X \mu^*,X' \Sigma^* X)$ for any one-to-one linear transformation $X$. We refer to this as the optimal Normal (oN) approximation. 

\section{Simulations} \label{sec:sim}
We conducted several simulation studies to assess the performance of the approximation in Theorem \ref{thm:KL} and Corollary \ref{cor:KL}. In each study, we simulated 100 realizations from
\begin{align}
\pi \sim \Diri{a,\ldots,a}, \quad y \sim \Mult{N}{\pi}, \label{eq:simmodel}
\end{align}
with the posterior of $\pi$ under a Dirichlet $\Diri{a,\ldots,a}$ prior being $\Diri{y_1+a,\ldots,y_d+a}$. 
We chose the dimension $d$ to be $2^8$, corresponding to a $p=8$-way contingency table for binary variables. To obtain a simulation-based approximation to the posterior for $\theta = \log(\pi/\pi_0)$ under the Diaconis--Ylvisaker prior, we sampled $mc$ many $\pi$ values from the $\Diri{y_1+a,\ldots,y_d+a}$ posterior and then transformed to $\theta = \ell^{-1}(\pi)$ to obtain posterior samples of $\theta$; we refer to this procedure as the Monte Carlo approximation. We also computed a Laplace approximation to the posterior under the Diaconis--Ylvisaker prior, which is given by $\No{\hat{\theta}_{MAP}}{\mc I(\hat{\theta}_{MAP})^{-1}}$, where $\hat{\theta}_{MAP}$ is the {\em maximum a-posteriori} estimate of $\theta$ and $\mc I(\theta)$ is the Fisher information matrix evaluated at $\theta$. The maximum a-posteriori estimate $\hat{\theta}_{MAP}$ was computed by the Newton--Raphson method. 

We compare the accuracy of the proposed Gaussian approximation to the Monte Carlo procedure and the Laplace approximation. In addition to the identity parameterization, i.e., $X = \mr I_d$ in \eqref{eq:llmX}, we also consider the corner parameterization given by $\log(\pi/\pi_0) = X \theta^*$ for an appropriate $X$ matrix; see Appendix for more details. For the Monte Carlo samples, each sample of $\theta$ is transformed to $\theta^*$ via $X^{-1} \theta = \theta^*$. For the normal approximations $\theta \sim \No{\mu}{\Sigma}$, the corresponding approximate posterior is given by $\theta^* \sim \No{X^{-1} \mu}{X^{-1} \Sigma X^{-1}}$.

%
%

We conduct simulations for different values of $N$ (250, 1000, and 10,000) and $a$ ($1$ and $1/d$). We then assess performance in several ways. 
\begin{itemize}[labelindent=*,leftmargin=*]
 \item Proportion of variation unexplained, measured by $\sqrt{\sum_{j=1}^d (\theta - \theta_0)^2}/\mbox{sd}(\theta_0)$, where $\theta_0$ is the true value of $\theta$ (or $\theta^*$, as appropriate).
 \item Coverage of 95 percent posterior credible intervals for $\theta$ or $\theta^*$.
 \item The standardized loss in the Frobenius norm for estimates of $\Sigma$, the posterior covariance, given by $||\widehat{\Sigma}-\Sigma||_F/||\Sigma||_F$, where $||S||_F$ is the Frobenius norm of $S$. Note that the covariance in Theorem \ref{thm:KL} is exactly the posterior covariance, so this measure is computed only for the simulation and Laplace approximations. 
 \item The value of the Kolmogorov-Smirnov statistic for comparing the Monte Carlo empirical measure $\frac{1}{mc} \sum_{t=1}^{mc} \delta_{\theta_t}$ to the normal approximation from Theorem \ref{thm:KL}, $\No{\mu}{\Sigma}$. 
 \item The computation time required to compute each posterior approximation.
\end{itemize}

Table \ref{tab:rmse} shows unexplained variation for the Laplace approximation, the Monte Carlo approximation for $mc=10^3, 10^4, 10^5,$ and $10^6$, and the optimal normal approximation. As expected, the optimal normal approximation outperforms the Laplace approximation. Moreover, it is comparable to the Monte Carlo approximation at every sample size and for all of the values of $mc$ considered. Performance for all approximations is noticeably better in the corner parametrization than the identity parametrization.

\begin{table}[h]
\centering
\caption{$\sqrt{\sum_{j=1}^d (\theta - \theta_0)^2}/\mbox{sd}(\theta_0)$ for different values of $mc$, different sample sizes, and two parametrizations. Results are averaged over 100 replicate simulations for each sample size.} \label{tab:rmse}
\begin{tabular}{lcccccc}
\hline
& Laplace & $mc=10^3$ & $mc=10^4$ & $mc=10^5$ & $mc=10^6$ & oN \\
\hline
identity, N=250 & 1.08 & 0.98 & 0.98 & 0.98 & 0.98 & 0.98 \\
corner, N=250 & 0.85 & 0.81 & 0.81 & 0.81 & 0.81 & 0.81 \\
identity, N=1000 & 0.84 & 0.77 & 0.77 & 0.77 & 0.77 & 0.77 \\
corner, N=1000 & 0.67 & 0.61 & 0.61 & 0.61 & 0.61 & 0.61 \\
identity, N=10,000 & 0.40 & 0.35 & 0.35 & 0.35 & 0.35 & 0.35 \\
corner, N=10,000 & 0.31 & 0.27 & 0.27 & 0.27 & 0.27 & 0.27 \\
\hline
\end{tabular}
\end{table}

Table \ref{tab:cover} shows coverage of approximate 95 percent credible intervals for the Laplace approximation, optimal Normal approximation, and the Monte Carlo approximation. The intervals derived using the Laplace approximation are universally too wide. Nominal coverage for the Monte Carlo approximation is insensitive to the value of $mc$ in the range tested, and is slightly high at the two smaller sample sizes. The optimal normal approximation has the best coverage; in all cases it is between 0.94 and 0.96 and for $N=10,000$ the coverage is 0.95 in both parametrizations. 

\begin{table}[h]
\centering
\caption{coverage of 95\% posterior credible intervals} \label{tab:cover}
\begin{tabular}{lcccccc}
\hline
& Laplace & $mc=10^3$ & $mc=10^4$ & $mc=10^5$ & $mc=10^6$ & oN \\
\hline
identity, N=250 & 0.95 & 0.97 & 0.97 & 0.97 & 0.97 & 0.96 \\
corner, N=250 & 1.00 & 0.96 & 0.96 & 0.96 & 0.96 & 0.96 \\
identity, N=1000 & 0.98 & 0.96 & 0.96 & 0.96 & 0.96 & 0.96 \\
corner, N=1000 & 1.00 & 0.94 & 0.94 & 0.94 & 0.94 & 0.94 \\
identity, N=10,000 & 1.00 & 0.95 & 0.95 & 0.95 & 0.95 & 0.95 \\
corner, N=10,000 & 1.00 & 0.95 & 0.95 & 0.95 & 0.95 & 0.95 \\
\hline
\end{tabular}
\end{table}

Table \ref{tab:frob} shows dependence of $||\widehat{\Sigma}-\Sigma||_F/||\Sigma||_F$ on $mc$ for the two different parametrizations and three sample sizes considered. Note that $\Sigma$ is known exactly since $\Sigma=\Sigma^*$, the posterior covariance under the DY prior. The main point of this table is to demonstrate the relatively large number of Monte Carlo samples required to obtain reasonably small error in estimation of the posterior covariance. Even with $10^5$ samples the relative error is on the 1 percent range. Thus, compound linear hypothesis testing and computation of credible regions is very inefficient using the Monte Carlo method. 

\begin{table}[h]
\centering
\caption{$||\widehat{\Sigma}-\Sigma||_F/||\Sigma||_F$ for different sample sizes and values of $mc$}.  \label{tab:frob}
 \begin{tabular}{lcccc}
\hline
& $mc=10^3$ & $mc=10^4$ & $mc=10^5$ & $mc=10^6$ \\
\hline
identity, N=250 & 0.0982 & 0.0328 & 0.0093 & 0.0032 \\
corner, N=250 & 0.0923 & 0.0290 & 0.0086 & 0.0029 \\
identity, N=1000 & 0.1045 & 0.0330 & 0.0103 & 0.0035 \\
corner, N=1000 & 0.0882 & 0.0277 & 0.0087 & 0.0029 \\
identity, N=10,000 & 0.1231 & 0.0397 & 0.0118 & 0.0040 \\
corner, N=10,000 & 0.0861 & 0.0280 & 0.0084 & 0.0027 \\
\hline
\end{tabular}
\end{table}

Table \ref{tab:time} shows the computation time in seconds for each of the three approximations. The Laplace approximation is fast, requiring about 0.03-0.04 seconds to compute at all sample sizes. The optimal normal approximation is about an order of magnitude faster, with the computation time arising mainly in computing the polygamma functions and matrix multiplications. The Monte Carlo approximation is about four orders of magnitude slower than the optimal Normal approximation. Here, only $mc=10^6$ is considered because of the non-negligible error in the posterior covariance for smaller samples; the algorithm scales linearly in $mc$ so for $mc=10^5$ the required time would be approximately 3 seconds. Only about 100 samples could be obtained in the 0.003 seconds required to compute the optimal normal approximation.

\begin{table}[h]
\centering
\caption{Average time (seconds) to compute each approximation, averaged over 100 replicate simulations for each sample size.} \label{tab:time}
 \begin{tabular}{lccc}
\hline
& Laplace & $mc=10^6$ & oN \\
\hline
N=250 & 0.037 & 32.652 & 0.003 \\
N=1000 & 0.031 & 31.980 & 0.003 \\
N=10,000 & 0.035 & 32.338 & 0.003 \\
\hline
\end{tabular}
\end{table}

Results in the previous tables make clear that the optimal normal approximation is superior to the other approximations considered in terms of point estimation, estimation of 95 percent credible intervals, covariance estimation, and computation time. However, it is possible that differences between the optimal normal approximation and the exact posterior exist in the tails of the distribution. To assess this, we compare the empirical measure of the Monte Carlo approximation using $mc=10^6$ samples to the optimal normal approximation by computing the Kolmogorov-Smirnov (KS) statistic for the marginal distributions of 20 randomly selected entries of $\theta$. The entries considered were re-selected for each of the 100 replicate simulations and for each of the three sample sizes. Shown in Figure \ref{fig:ks} are histograms of these KS statistics in the corner and identity parametrizations. Most are less than 0.02, and none are greater than 0.07. Considering that the KS statistic is a point estimate of the total variation distance between distributions, this indicates that the optimal normal approximation is an excellent approximation to the posterior marginals. Moreover, we cannot rule out the possibility of residual Monte Carlo error in the marginals from the Monte Carlo approximation, which may account for part of the observed discrepancy. 

\begin{figure}[h]
\centering
 \begin{tabular}{cc}
  Kolmogorov-Smirnov -- identity & Kolmogorov-Smirnov -- corner \\
  \includegraphics[width=0.4\textwidth]{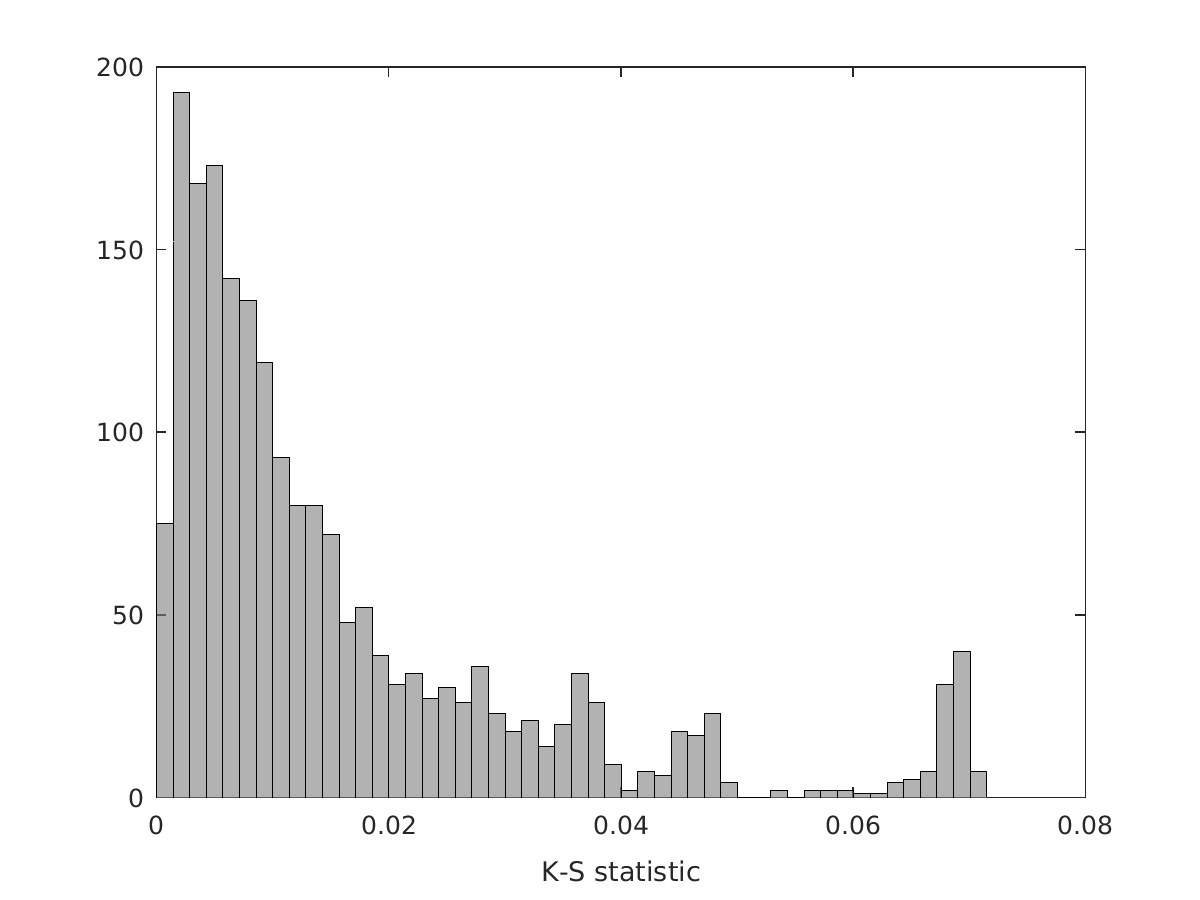} & \includegraphics[width=0.4\textwidth]{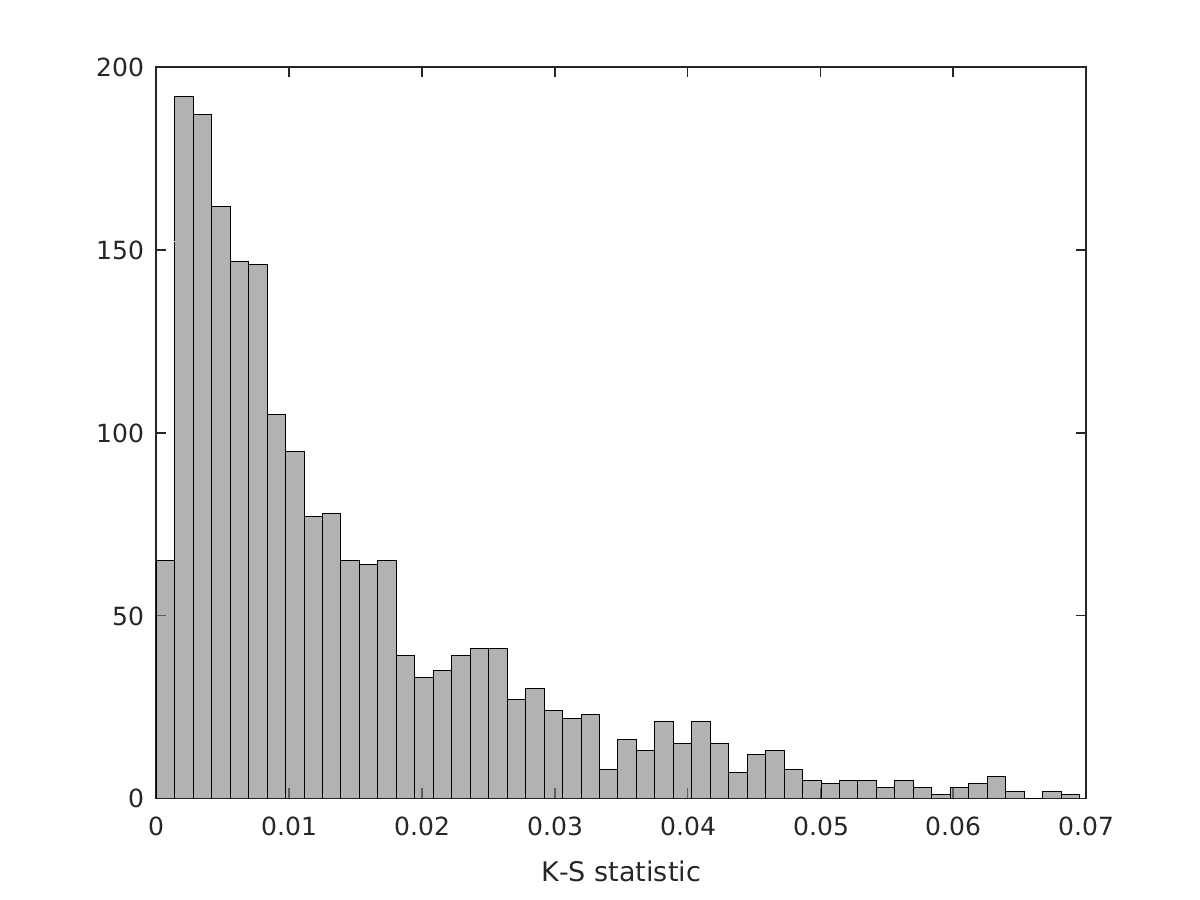} \\
 \end{tabular}
\caption{Distribution of Kolmogorov-Smirnov statistics comparing $\frac{1}{mc} \sum_{t=1}^{mc} \delta_{\theta_t}$ to the oN approximation for 20 randomly selected entries of $\theta$ and over 100 replicate simulations (entries of $\theta$ were re-selected for each replicate).} \label{fig:ks}
\end{figure}

\section{Real Data Example} \label{sec:app}
We consider the Rochdale data, a $2^8$ contingency table with $N=665$ that is over 50 percent sparse, and for which the top ten cell counts all exceed 20. This dataset is described at length in \cite{dobra2011copula}. We first assess the accuracy of the approximation to the full posterior under the Diaconis--Ylvisaker prior in the same manner as in \S \ref{sec:sim}, by comparing marginal posteriors computed using the approximation to those obtained from large Monte Carlo samples from the exact Dirichlet posterior transformed to the log-linear parametrization. For the log-linear model in the corner parametrization, the distribution of Kolmogorov-Smirnov statistics computed for the 255 entries of $\theta^*$ obtained by comparing $10^6$ Monte Carlo samples from the exact posterior to the optimal Gaussian approximation is shown in Fig. \ref{fig:ksroch}. The distribution is very similar to that observed for the simulations in \S \ref{sec:sim}. 

\begin{figure}[h]
\centering
\includegraphics[width=0.4\textwidth]{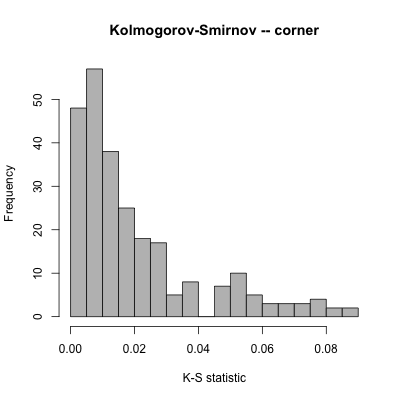}
\caption{Histogram of Kolmogorov-Smirnov statistics for the comparison of $10^6$ Monte Carlo samples from the exact Dirichlet posterior, transformed to $\theta^*$, to the optimal Gaussian approximation to the posterior for $\theta^*$ under the Diaconis--Ylvisaker prior.} \label{fig:ksroch}
\end{figure}

Undoubtedly, the Diaconis--Ylvisaker prior is less well-suited to inference on important variable interactions in this dataset than the more sophisticated methods of \cite{dobra2011copula} and \cite{bhattacharya2012simplex}. However, our approximation has the advantage of being essentially computation-free, whereas the methods of \cite{dobra2011copula} and \cite{bhattacharya2012simplex} are computationally intensive even at this small scale. In many settings, particularly with modern large-scale problems, some loss of performance may be acceptable in order to obtain useful inferences instantaneously. Thus, we are interested in the extent to which our method can replicate the results of \cite{dobra2011copula}, which were similar to those of \cite{bhattacharya2012simplex} in many respects. We analyze performance in testing conditional independence hypotheses (i.e. learning an interaction graph).

Sparse $\theta^*$ is a set of measure zero with respect to the posterior under the Diaconis--Ylvisaker prior. To obtain a sparse point estimate of the interaction graph, we employ the penalized credible region approach of \cite{bondell2012consistent}. This method produces a point estimate by finding the sparsest $\theta^*$ within a $1-\alpha$ credible region for $\theta^*$. Although the exact solution to this problem is intractable, \cite{bondell2012consistent} show that it can be approximated using a lasso path, and provide software in the \texttt{BayesPen R} package  \citep{wilson2015bayes}. Using the resulting lasso path from \texttt{BayesPen}, the selected model corresponding to any value of $\alpha \in (0,1)$ can be obtained as follows. 
\begin{enumerate}[labelindent=*,leftmargin=*]
 \item For the selected value of $\alpha$, find the $1-\alpha$ quantile of a $\chi^2$ distribution with $d-1$ degrees of freedom. Label this $\delta_{\max}$.
 \item For each model $\theta_0$ in the Lasso path, compute the Mahalanobis distance $\delta(\theta_0) = (\theta^*-\theta_0)^T (\Sigma^*)^{-1} (\theta^*-\theta_0)$.
 \item Find the sparsest model in the lasso path having $\delta(\theta_0)\le \delta_{\max}$. This is the sparse point estimate. 
\end{enumerate}
With 256 cells and 665 observations, the posterior under the saturated model with Diaconis--Ylvisaker prior is very diffuse. To make a reasonable comparison, we obtain the posterior under the Diaconis--Ylvisaker prior for the marginal tables corresponding to all ${8 \choose 4} = 70$ unique subsets of four variables. For each of these marginal tables, we then utilize the penalized credible region procedure of \cite{bondell2012consistent} to obtain a sparse model. For comparison, we utilize the median probability graphical model from \cite{dobra2011copula}, which is shown in Table \ref{tab:cggmrochdale}. Specifially, for every subset of four variables, we obtain the marginal graph corresponding to the median probability model of \cite{dobra2011copula} by removing the complement of the subset of nodes under consideration and moralizing, i.e. placing an edge between nodes that (1) have an edge between them in the full graph or (2) are connected solely by a path through nodes that were removed. We treat the graph obtained in this way as the standard for assessing performance of the penalized credible region applied to our Gaussian posterior approximation.

We compute the true (false) negative and positive counts for the penalized credible region procedure applied to our posterior Gausian approximation to all 70 marginal graphs, treating the corresponding marginal median probability graph from \cite{dobra2011copula} as the truth. This produces a total of $70 {4 \choose 2} = 420$ dependent pseudo hypothesis tests. The results for $\alpha = 0.1$ in the penalized credible region procedure are shown in Table \ref{tab:cggmrochdale}. We obtain a false discovery rate of 0.02, and an $F_1$ score of 0.89, indicating that for marginal tables of size $2^4$, the posterior approximation is useful for model selection on the Rochdale data.

\begin{table}
\caption{Left, titled CGGM Results: Marginal posterior inclusion probabilities of edges (above the main diagonal) and indicator of edge inclusion in the median probability model (below the main diagonal) from copula Gaussian graphical model estimated on Rochdale data in \cite{dobra2011copula}. Rows and columns correspond to the eight binary variables, which are labeled a-h. Right, titled Comparison to oN: table of edge classifications for all marginal tables of size $2^4$ from copula Gaussian graphical model median probability model (columns, labeled CGGM) and penalized credible region for Gaussian approximation to posterior under the DY prior (rows, labeled oN-PCR).} \label{tab:cggmrochdale}
\centering
\begin{tabular}[t]{ccc}
CGGM Results & & Comparison to oN \\ 
\begin{tabular}{c|cccccccc}
& a & b & c & d & e & f & g & h \\
\hline
a & -- & 0.93 & 0.67 & 0.92 & 0.32 & 0.42 & 1 & 0.26 \\
b & 1 & -- & 0.27 & 1 & 0.88 & 0.29 & 0.70 & 0.96 \\
c & 1 & 0 & -- & 0.29 & 0.91 & 0.35 & 0.85 & 0.25 \\
d & 1 & 1 & 0 & -- & 0.37 & 0.59 & 0.66 & 0.50 \\
e & 0 & 1 & 1 & 0 & -- & 0.98 & 0.58 & 0.17 \\
f & 0 & 0 & 0 & 1 & 1 & -- & 0.82 & 0.22 \\
g & 1 & 1 & 1 & 1 & 1 & 1 & -- & 0.32 \\
h & 0 & 1 & 0 & 1 & 0 & 0 & 0 & -- \\
\hline
\end{tabular}
&\hspace{10mm} &
\begin{tabular}{cc|cc}
& & \multicolumn{2}{c}{CGGM} \\
& & 0 & 1 \\
\hline
\multirow{2}{*}{oN-PCR} & 0 & 4 & 74 \\
& 1 & 7 & 335 \\
\end{tabular}
\end{tabular}
\\
\end{table}

\section{Discussion}\label{sec:dis}
Outside of linear models, conjugate priors are often non-standard or their multivariate generalizations are difficult to work with. This hampers uncertainty quantification because it is difficult to obtain posterior credible regions for parameters under such priors. Given that automatic and coherent quantification of uncertainty through the posterior is one of the chief advantages of a fully Bayesian approach, this limitation is a significant problem. The optimal Gaussian approximation to the posterior for log-linear models with Dianconis-Ylvisaker conjugate priors derived here offers a highly accurate and essentially computation-free approximation to posterior credible regions for this important class of models. Interestingly, this Gaussian approximation is not the Laplace approximation, and it is faster to compute and offers a better approximation to the posterior than the Laplace approximation. If similar results could be obtained for the posterior in other models, it suggests that the Laplace approximation may not be an appropriate default Gaussian approximation to the posterior. The theoretical result provided here can be easily extended to cases where some categories cannot co-occur, i.e. cases of structural zeros in contingency tables. Extensions to model selection using our approximation are also available by the penalized credible region approach. It seems reasonable that the strategy used here to obtain optimality and convergence rate guarantees could be extended to a larger class of generalized linear models by studying the properties of multivariate Gaussian distributions under inverse link transformations. This may also present a strategy for obtaining approximate credible intervals for parameters in the Bayesian model averaging context for generalized linear models with conjugate priors.

\section*{Acknowledgement}
The authors thank David Dunson for useful conversations and comments during the preparation of this manuscript.

\appendix
\section{Log-linear model details}
The discussion here largely follows \cite{massam2009conjugate} and \cite{lauritzen1996graphical} in its presentation. Let $V$ be the set of variables that will be collected into a contingency table. Let $\mc I_{\gamma}, \gamma \in V$ denote the set of possible levels of values of $\gamma$. Without loss of generality, we can take this set to be a finite collection of sequential nonnegative integers. Let $\mc I = \bigtimes_{\gamma \in V} \mc I_{\gamma}$ be the set of all possible combinations of levels of the variables in $V$. Every cell $i$ of the contingency table corresponds to an element of $V$; thus $|\mc I| = d+1$, where $d$ is defined as in the main text. 

Following \cite{lauritzen1996graphical}, define a cell of the contingency table as $i = (i_{\gamma}, \gamma \in V)$, and let $\pi(i) = \pr[y_1 = i_1,\ldots,y_p = i_p]$. For any $E \subset V$, let $i_E = (i_{\gamma}, \gamma \in E)$ be the cell of the $E$-marginal table corresponding to the values in $i$ of the variables in $E$. Finally, designate the ``base'' cell $i^* = (0,0,\ldots,0)$. Thus, every $i$ can be written as $i = (i_E, i^*_{E^c})$, where $E$ is the subset of $V$ on which $i \ne 0$. Then, the log-linear model in the corner parametrization is given by
\begin{align*}
 \log \frac{\pi(i_E,i^*_{E^c})}{\pi(i^*)} = \sum_{F \subseteq_{\emptyset} E} \theta_F(i_F),
\end{align*}
where for any $F \subset V$, $\theta_F(i_F)$ is a parameter corresponding the the variables in $F$ taking the values in $i_F$, and the notation $\subseteq_{\emptyset}$ means all subsets excluding the empty set. Refer to Proposition 2.1 in \cite{letac2012bayes} for a result showing how the model can be expressed in the form in (\ref{eq:llmX}).

\FloatBarrier
\section{Proof of Proposition \ref{prop:diridy}}
This is readily seen by the change of variable theorem; one only needs some work to calculate the Jacobian term for the change of variable. The matrix of partial derivatives $J = (\partial \theta_j/\partial \pi_r)_{jr}$ is given by 
$$
\frac{\partial \theta_j}{\partial \pi_j} = \frac{1 - \sum_{l \ne j} \pi_l }{\pi_j (1 - \sum_{l=1}^d \pi_l) }, \quad \frac{\partial \theta_j}{\partial \pi_r} = - \frac{1}{1 - \sum_{l=1}^d \pi_l}, \quad (1 \le j \ne r \le d).
$$
Write $J = U + u u^{\T}$, where  $u = (1 - \sum_{l=1}^d \pi_l)^{-1/2} (1, -1, \ldots, -1)^{\T}$ and $U = \mbox{Diag}(1/\pi_1, \ldots, 1/\pi_d)$. We then have $|J| = |U| (1 + u^{\T} U^{-1} u)$ and therefore, 
$$
|J|^{-1} = \pi_1 \ldots \pi_d \bigg(1 - \sum_{l=1}^d \pi_l \bigg) = \frac{e^{\sum_{l=1}^d \theta_l} }{ (1  + \sum_{l=1}^d e^{\theta_l} )^{d+1} }.
$$ 
The proof is concluded by noting that 
$p(\theta; \alpha) = q(\ell(\theta); \alpha) \, |J|^{-1}$. 
\qed

\FloatBarrier

\section{Proof of main results}
We first state some preparatory results that are used to prove the main results. 

\subsection{Preliminaries}
The following identity for the Gamma function is well known (see, e.g., \cite{abramowitz1964handbook}). For $z > 0$, 
\begin{align}\label{eq:gamma_fn_bd}
\Gamma(z) = \frac{\log(2\pi)}{2} + \bigg(z - \frac{1}{2}\bigg) \log z - z + R(z), 
\end{align}
where $0 < R(z) < 1/(12 z)$. 

The digamma function $\psi(z) = \frac{d}{dz} \log \Gamma(z) = \frac{\Gamma'(z)}{\Gamma(z)}$ satisfies $\psi(z+1) = \psi(z) + 1/z$ for any $z > 0$. 
We use the following bound for the digamma function from Lemma 1 of \cite{chen2003best}. For any $z > 0$, 
\begin{align}\label{eq:digamma_bd1}
\frac{1}{2z} - \frac{1}{12z^2} < \psi(z+1) - \log z < \frac{1}{2z}. 
\end{align}
The trigamma function $\psi'(z) = \frac{d}{dz} \psi(z)$ is the derivative of the digamma function. We derive a simple bound for the trigamma function that is used in the sequel.
\begin{lemma}\label{lem:trigamma_bd}
For any $z > 1/3$, 
\begin{align}\label{eq:trigamma_bd_f}
\frac{1}{z} < \psi'(z) < \frac{1}{z} + \frac{1}{z^2}.
\end{align}
The condition $z > 1/3$ is only required for the upper bound. 
\end{lemma}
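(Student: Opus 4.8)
The plan is to reduce everything to the classical series representation $\psi'(z) = \sum_{k=0}^{\infty} (z+k)^{-2}$, obtained by differentiating the Weierstrass series for $\psi$ term by term (legitimate by uniform convergence on compact subsets of $(0,\infty)$), and then to compare this sum against the integral of $f(x) = (z+x)^{-2}$. For each fixed $z > 0$ this $f$ is positive, strictly decreasing and convex on $[0,\infty)$, with $\int_0^\infty f(x)\,dx = 1/z$, so the whole lemma becomes a careful integral test.

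For the lower bound, decreasingness gives $f(k) > \int_k^{k+1} f(x)\,dx$ for every $k \ge 0$ (strict because $f$ is strictly decreasing); summing over $k \ge 0$ concatenates the integrals into $\int_0^\infty f = 1/z$, so $\psi'(z) > 1/z$. For the upper bound I would peel off the $k=0$ term, which equals $1/z^2$, and bound the tail: for $k \ge 1$ the same monotonicity gives $f(k) < \int_{k-1}^{k} f(x)\,dx$, whence $\sum_{k \ge 1} f(k) < \int_0^\infty f = 1/z$; adding back $f(0) = 1/z^2$ yields $\psi'(z) < 1/z + 1/z^2$. The only care needed is getting the direction of the comparison right on each side and keeping the inequalities strict. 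Notably this argument uses only $z > 0$, so it proves both inequalities without the restriction $z > 1/3$.

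I would then record the restriction $z > 1/3$ as what a coarser, asymptotic route forces, in case one prefers to argue from the expansion rather than the integral test. The trigamma function obeys the enveloping estimate $\frac1z + \frac{1}{2z^2} < \psi'(z) < \frac1z + \frac{1}{2z^2} + \frac{1}{6z^3}$; the lower half already gives $\psi'(z) > 1/z$, while the upper half gives $\psi'(z) < 1/z + 1/z^2$ precisely when $\frac{1}{6z^3} \le \frac{1}{2z^2}$, i.e. when $z \ge 1/3$. The main (and essentially only) technical obstacle on this second route is rigorously pinning down the sign of the remainder in that three-term envelope, either by controlling the Euler--Maclaurin error for $\sum (z+k)^{-2}$ or, equivalently, by sandwiching $t/(1-e^{-t})$ between its low-order Taylor polynomials near $1 + t/2 + t^2/12$ in the representation $\psi'(z) = \int_0^\infty t e^{-zt}/(1-e^{-t})\,dt$ and integrating term by term. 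Since the integral-test argument of the previous paragraph sidesteps this entirely and in fact proves the slightly stronger statement valid for all $z>0$, I would present that as the main proof.
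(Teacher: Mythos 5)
Your proof is correct, and while your lower bound coincides exactly with the paper's (the same integral test $f(k) > \int_k^{k+1} f(x)\,dx$ applied to $f(x) = (z+x)^{-2}$ and summed over $k \ge 0$), your upper bound takes a genuinely different and more elementary route. The paper imports Lemma 1 of \cite{chen2003best}, which states $1/z - \psi'(z+1) > 1/(2z^2) - 1/(6z^3)$, combines it with the recurrence $\psi'(z) = \psi'(z+1) + 1/z^2$ to get $\psi'(z) < 1/z + 1/(2z^2) + 1/(6z^3)$, and only then invokes $z > 1/3$ to absorb $1/(6z^3)$ into $1/(2z^2)$ --- exactly the mechanism you identify in your second paragraph. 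Your main argument instead peels off the $k=0$ term $1/z^2$ and applies the same integral comparison to the tail, $\sum_{k \ge 1}(z+k)^{-2} < \int_0^{\infty}(z+x)^{-2}\,dx = 1/z$, which is self-contained (nothing needed beyond the series representation) and proves the upper bound for all $z > 0$. This is strictly stronger than the lemma as stated: it shows the hypothesis $z > 1/3$ is an artifact of the paper's proof route, not of the statement itself. What the paper's route buys is mainly economy of citation --- the same lemma of \cite{chen2003best} is already used for the digamma bound elsewhere in the appendix, so reusing it is short --- whereas your route buys self-containedness and removes the hypothesis. Your sketch of the enveloping estimate $1/z + 1/(2z^2) < \psi'(z) < 1/z + 1/(2z^2) + 1/(6z^3)$ as a second path is also accurate and correctly locates the only real work there (pinning down the Euler--Maclaurin remainder), but as you note it is superfluous given the integral-test argument.
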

\begin{proof}
From  \cite{chen2003best}, the trigamma function admits a series expansion
$$
\psi'(z) = \sum_{j=0}^{\infty} \frac{1}{(z+j)^2}
$$
valid for any $z > 0$. The function $t \mapsto t^{-2}$ is monotonically decreasing on $(0, \infty)$ and hence $x^{-2} > \int_{x}^{x+1} t^{-2} dt$ for any $x > 0$. Therefore, for any $z > 0$, $\psi'(z) > \sum_{j=0}^{\infty} \int_{z+j}^{z+j+1} t^{-2} dt = \int_{z}^{\infty} t^{-2} dt = z^{-1}$. For the upper bound, we use Lemma 1 of  \cite{chen2003best} which states that $1/z - \psi'(z+1) > 1/(2z^2) - 1/(6z^3)$ for any $z > 0$. Since $\psi(z+1) = \psi(z) + 1/z$, $\psi'(z+1) = \psi'(z) - 1/z^2$, which yields $\psi'(z) - 1/z < 1/z^2 - 1/(2z^2) + 1/(6z^3) = 1/(2z^2) + 1/(6z^3)$ for any $z > 0$. The conclusion follows since $1/(6z^3) < 1/(2 z^2)$ for any $z > 1/3$.
\end{proof}
Finally, we state a useful result in Lemma \ref{lem:kl}.  
\begin{lemma}\label{lem:kl}
Let $X \in \mb R^d$ be a random vector with $E X = \mu_X$ and $\mbox{var}(X) = \Sigma_X$. For $\mu \in \mb R^d$ and $d \times d$ positive definite matrix $\Sigma$, the mapping 
\begin{align}
(\mu, \Sigma) \mapsto g(\mu, \Sigma) = \log |\Sigma| + E (X - \mu)^{\T} \Sigma^{-1} (X - \mu)
\end{align}
attains its minima when $\mu = \mu_X$ and $\Sigma = \Sigma_X$. The minimum value of the objective function $g(\mu_X, \Sigma_X) = \log |\Sigma_X| + d$. 
\end{lemma}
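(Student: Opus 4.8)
The plan is to minimize $g$ by profiling: first fix $\Sigma \succ 0$ and optimize over $\mu$, then substitute the optimal $\mu$ and minimize the resulting function of $\Sigma$ alone. This decoupling is available because the $\log|\Sigma|$ term does not involve $\mu$, so the inner problem over $\mu$ is a pure quadratic.

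First I would handle the inner minimization over $\mu$ for fixed $\Sigma$. Writing $X - \mu = (X - \mu_X) + (\mu_X - \mu)$ and expanding the quadratic form, the cross term has expectation proportional to $E(X - \mu_X) = 0$ and hence vanishes, giving
\[
E(X-\mu)^{\T} \Sigma^{-1}(X-\mu) = E(X-\mu_X)^{\T}\Sigma^{-1}(X-\mu_X) + (\mu_X - \mu)^{\T} \Sigma^{-1}(\mu_X-\mu).
\]
Since $\Sigma^{-1} \succ 0$, the second summand is nonnegative and equals zero exactly when $\mu = \mu_X$; thus for every fixed $\Sigma$ the optimal mean is $\mu_X$, independent of $\Sigma$.

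Next I would substitute $\mu = \mu_X$ and rewrite the remaining expectation as a trace via the cyclic property, $E(X-\mu_X)^{\T}\Sigma^{-1}(X-\mu_X) = E\,\tr\{\Sigma^{-1}(X-\mu_X)(X-\mu_X)^{\T}\} = \tr(\Sigma^{-1}\Sigma_X)$, so the problem reduces to minimizing $h(\Sigma) = \log|\Sigma| + \tr(\Sigma^{-1}\Sigma_X)$ over $\Sigma \succ 0$. Assuming $\Sigma_X \succ 0$ (which holds in the application, since the relevant $\Sigma^*$ has compound-symmetry structure and is positive definite), I would apply the congruence change of variables $A = \Sigma_X^{-1/2}\Sigma\,\Sigma_X^{-1/2}$, under which $\log|\Sigma| = \log|\Sigma_X| + \log|A|$ and $\tr(\Sigma^{-1}\Sigma_X) = \tr(A^{-1})$. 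Letting $\lambda_1, \ldots, \lambda_d > 0$ be the eigenvalues of $A$, the objective becomes $\log|\Sigma_X| + \sum_{i=1}^d (\log \lambda_i + 1/\lambda_i)$, a separable sum of identical scalar functions.

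Finally, the scalar function $\phi(\lambda) = \log\lambda + 1/\lambda$ on $(0,\infty)$ has derivative $(\lambda - 1)/\lambda^2$, so it is strictly decreasing on $(0,1)$ and strictly increasing on $(1,\infty)$, with a unique global minimum at $\lambda = 1$ where $\phi(1) = 1$. Hence $h$ is minimized precisely when every $\lambda_i = 1$, i.e. $A = \mathrm{I}_d$, i.e. $\Sigma = \Sigma_X$, with minimum value $\log|\Sigma_X| + d$; combined with the first stage this yields the claim. The computation is routine, and the only step needing care is the reduction of the matrix minimization to the scalar problem. I would favor the eigenvalue/congruence argument over matrix calculus precisely because it establishes global minimality and uniqueness directly, rather than only identifying $\Sigma = \Sigma_X$ as a critical point of the first-order condition $\Sigma^{-1} = \Sigma^{-1}\Sigma_X\Sigma^{-1}$.
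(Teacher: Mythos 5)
Your proof is correct, and it differs from the paper's in the final step. Both arguments start from the same decomposition $E(X-\mu)^{\T}\Sigma^{-1}(X-\mu) = \tr(\Sigma_X\Sigma^{-1}) + (\mu_X-\mu)^{\T}\Sigma^{-1}(\mu_X-\mu)$, but the paper then concludes in one stroke: it observes that $g(\mu,\Sigma) - g(\mu_X,\Sigma_X) = \tr(\Sigma_X\Sigma^{-1}) + (\mu_X-\mu)^{\T}\Sigma^{-1}(\mu_X-\mu) - d - \log|\Sigma_X\Sigma^{-1}|$ is exactly $2\,D\{N(\mu_X,\Sigma_X) \dmid N(\mu,\Sigma)\}$ and invokes nonnegativity of the Kullback--Leibler divergence as a black box. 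You instead profile out $\mu$, reduce to $h(\Sigma)=\log|\Sigma|+\tr(\Sigma^{-1}\Sigma_X)$, and settle the matrix problem by the congruence $A=\Sigma_X^{-1/2}\Sigma\,\Sigma_X^{-1/2}$ and the scalar inequality $\log\lambda + 1/\lambda \ge 1$ with equality iff $\lambda=1$. In effect you unpack the very inequality the paper cites: the nonnegativity of Gaussian--Gaussian KL is usually proved by exactly this spectral argument. What your route buys: it is self-contained and elementary (no appeal to information-theoretic facts, which is aesthetically cleaner given that the lemma is itself the engine behind the paper's KL-optimality result), and it delivers strict uniqueness of the minimizer for free, whereas the paper's argument gives uniqueness only if one additionally recalls that KL vanishes iff the two Gaussians coincide. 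What the paper's route buys: brevity, and no need to assume or discuss invertibility of $\Sigma_X$ separately --- though note that both proofs (and indeed the lemma statement itself, via the term $\log|\Sigma_X|$) implicitly require $\Sigma_X \succ 0$, a point you were right to flag explicitly and verify for the application at hand.
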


\begin{proof}
To start with, $E\{ (X - \mu_X)^{\T} \Sigma_X^{-1} (X - \mu_X) \} = \tr [ E \{ (X - \mu_X)(X - \mu_X)^{\T} \Sigma_X^{-1} \}] = \tr(\mr I_d) = d$ and hence $g(\mu_X, \Sigma_X) = \log |\Sigma_X| + d$. Fix $\mu \in \mb R^d$ and $\Sigma$ positive definite. We can write  
\begin{align*}
E\{  (X - \mu)\Sigma^{-1} (X - \mu)  \}
& = \tr[ E \{ (X - \mu) (X - \mu)^{\T} \Sigma^{-1} \} ] \\
& = \tr[ E\{ (X - \mu_X) (X - \mu_X)^{\T} \Sigma^{-1} \} + (\mu_X - \mu)\Sigma^{-1}(\mu_X - \mu) ] \\
& = \tr(\Sigma_X \Sigma^{-1}) + (\mu_X - \mu)^{\T} \Sigma^{-1} (\mu_X - \mu). 
\end{align*}
Therefore, 
\begin{align*}
g(\mu, \Sigma) - g(\mu_X, \Sigma_X) =  \tr(\Sigma_X \Sigma^{-1}) + (\mu_X - \mu)^{\T} \Sigma^{-1} (\mu_X - \mu) - d - \log |\Sigma_X \Sigma^{-1}|.
\end{align*}
The above quantity is non-negative since it equals $2 D\big\{N(\mu_X, \Sigma_X) \dmid N(\mu, \Sigma)\big\}$, i.e., twice the Kullback--Leibler divergence between $N(\mu_X, \Sigma_X)$ and $N(\mu, \Sigma)$. Since $\mu$ and $\Sigma$ were arbitrary, the first part is proved. The second part has been already proved at the beginning. 
\end{proof}

\subsection{Proof of Theorem \ref{thm:KL} and Corollary \ref{cor:KL}}
We can now give a proof of Theorem \ref{thm:KL}. Recall the Dirichlet density $q$ from \eqref{eq:dir_dens} and the logistic normal density $\wt{q}$ from \eqref{eq:ln_dens}. We shall write $q(\pi)$ and $\wt{q}(\pi)$ in place of $q(\pi \mid \beta)$ and $\wt{q}(\pi \mid \mu, \Sigma)$ henceforth for brevity. From \eqref{eq:dir_dens} and \eqref{eq:ln_dens}, 
\begin{align*}
\log \frac{q(\pi)}{\wt{q}(\pi)} = \log B_{\beta} + \frac{d \log (2 \pi)}{2}  + \sum_{j=0}^d \beta_j \log \pi_j + + \frac{\log |\Sigma|}{2}  + \frac{1}{2} \big\{ \log(\pi/\pi_0) - \mu \big\}^{\T} \Sigma^{-1} \big\{ \log(\pi/\pi_0) - \mu \big\}.
\end{align*}
Observe that $\mu$ and $\Sigma$ appear only in the last two terms in the right hand side of the above display. Invoking Lemma \ref{lem:kl}, it is therefore evident that $D(q \dmid \wt{q}) = E_q \log(q/\wt{q})$ is minimized when $\mu^* = E_q \log(\pi/\pi_0)$ and $\Sigma^* = \mbox{var}_q \{\log(\pi/\pi_0)\}$, and the minimum vaue of the Kullback--Leibler divergence is
\begin{align}\label{eq:min_kl_1}
\log B_{\beta} + \sum_{j=0}^d \beta_j E_q \log \pi_j + \frac{d}{2}\{1 + \log (2 \pi)\} + \frac{\log |\Sigma^*|}{2}.
\end{align} 

Using standard  properties of the Dirichlet distribution or Exponential family differential identities, with $\beta = \sum_{j=0}^d \beta_j$, 
\begin{align}
& E_q \log \pi_j = \psi(\beta_j) - \psi(\beta), \quad j = 0, 1, \ldots d, \label{eq:dir_prop1}\\
& \mbox{cov}_q(\log \pi_j, \log \pi_l) = \psi'(\beta_j) \delta_{jl} - \psi'(\beta), \quad j, l = 0, 1, \ldots, d. \label{eq:dir_prop2}
\end{align}
Therefore, $\mu_j^* = E_q \log \pi_j - E_q \log \pi_0 = \psi(\beta_j) - \psi(\beta_0)$ for $j=1, \ldots d$. Next, $\sigma_{jj'}^* = \mbox{cov}_q(\log \pi_j - \log \pi_0, \log \pi_{j'} - \log \pi_0) = \delta_{jj'} \psi'(\beta_j) + \psi'(\beta_0) $ for $j, j' = 1, \ldots, d$. The expressions for $\mu^*$ and $\Sigma^*$ are identical to \eqref{eq:min_mus}, proving the first part of the theorem. Note this also establishes Proposition \ref{prop:meancov}.

We now proceed to bound each term in the expression for the minimum Kullback--Leibler divergence in \eqref{eq:min_kl_1}; refer to them by $T_1, T_2, T_3$ and $T_4$ respectively. First, we have,
\begin{align}
T_1 & := \log B_{\beta} = \log \Gamma(\beta) - \sum_{j=0}^d \Gamma(\beta_j) \notag \\
& < - \frac{d \log (2\pi)}{2} + \bigg( \beta \log \beta - \sum_{j=0}^d \beta_j \log \beta_j \bigg) - \frac{1}{2} \bigg(\log \beta - \sum_{j=0}^d \log \beta_j \bigg) + \frac{1}{12 \beta}. \label{eq:bd_T1}
\end{align}
In the above display, we used \eqref{eq:gamma_fn_bd} to bound $\log \Gamma(\beta)$ from above and $\log \Gamma(\beta_j)$s from below. 
The $(-\beta)$ term in upper bound to $\log \Gamma(\beta)$ cancels out the $(-\sum_{j=0}^d \beta_j)$ contribution from the lower bounds to the $\log \Gamma(\beta_j)$s. Next, 
\begin{align}
T_2 
&:= \sum_{j=0}^d \beta_j E_q \pi_j = \sum_{j=0}^d \beta_j \{\psi(\beta_j) - \psi(\beta)\} \notag \\
& = \sum_{j=0}^d \beta_j \{ \psi(\beta_{j+1}) - \psi(\beta+1) \} - \sum_{j=0}^d \beta_j \bigg(\frac{1}{\beta_j} - \frac{1}{\beta} \bigg) \notag \\
& = \bigg\{\sum_{j=0}^d \beta_j \psi(\beta_{j+1}) - \beta \psi(\beta) \bigg\} - d \notag \\
& < \bigg(\sum_{j=0}^d \beta_j \log \beta_j - \beta \log \beta\bigg) - \frac{d}{2} + \frac{1}{12 \beta}. \label{eq:bd_T2}
\end{align}
In the first line of the above display, we used \eqref{eq:dir_prop1}. From the first to the second line, we used the identity $\psi(z+1) = \psi(z) + 1/z$. From the second to the third line, we only use $\sum_{j=0}^d \beta_j = \beta$. From the third to the fourth line, we made use of the bound \eqref{eq:digamma_bd1} for the digamma function $\psi$. From the upper bound in \eqref{eq:digamma_bd1}, $\beta_j \psi(\beta_{j+1}) < \beta_j \log \beta_j + 1/2$ and hence $\sum_{j=0}^d \beta_j \psi(\beta_{j+1}) < \sum_{j=0}^d \beta_j \log \beta_j + (d+1)/2$. From the lower bound in \eqref{eq:digamma_bd1}, $\beta \psi(\beta) > \beta \log \beta + 1/2 - 1/(12 \beta)$. 

Finally, from \eqref{eq:dir_prop2}, we can write $\Sigma^* = D + \psi'(\beta_0) \bs 1 \bs{1}^{\T}$, with $D = \mbox{diag}(\psi'(\beta_1), \ldots, \psi'(\beta_d))$. Using the fact $|X + u v^{\T}| = |X| (1 + v^{\T} X^{-1} u)$, we obtain
\begin{align*}
|\Sigma^*| = \bigg\{ 1 + \sum_{j=1}^d \psi'(\beta_0)/\psi'(\beta_j) \bigg\} \bigg\{ \prod_{j=1}^d \psi'(\beta_j) \bigg\} 
= \bigg\{ \sum_{j=0}^d \frac{\psi'(\beta_0)}{\psi'(\beta_j)} \bigg\} \bigg\{ \prod_{j=1}^d \psi'(\beta_j) \bigg\}.
\end{align*}
From Lemma \ref{lem:trigamma_bd}, $\psi'(\beta_j) > 1/\beta_j$, implying
\begin{align}
T_4 
&:= \frac{\log |\Sigma^*|}{2} = \frac{1}{2} \bigg[ \log \bigg\{ \sum_{j=0}^d \frac{\psi'(\beta_0)}{\psi'(\beta_j)} \bigg\} + \sum_{j=1}^d \log \psi'(\beta_j) \bigg] \notag \\
& < \frac{1}{2} \bigg\{ \log \beta + \sum_{j=0}^d \log \psi'(\beta_j) \bigg\}. \label{eq:bd_T4}
\end{align} 
Recalling $T_3 = d \{1 + \log (2 \pi)\}/2$ and substituting the bounds for $T_1, T_2$ and $T_4$ from \eqref{eq:bd_T1}, \eqref{eq:bd_T2} and \eqref{eq:bd_T4} in \eqref{eq:min_kl_1}, we obtain, after plenty of cancellations,
\begin{align*}
\sum_{j=1}^4 T_j 
&< \frac{1}{2} \sum_{j=0}^d \log \{ \beta_j \psi'(\beta_j) \} + \frac{1}{6 \beta}  \\
& < \frac{1}{2} \sum_{j=0}^d \frac{1}{\beta_j} + \frac{1}{6 \beta}. 
\end{align*}
From the first to the second line, we invokeed Lemma \ref{lem:trigamma_bd} to bound $\beta_j \psi'(\beta_j) < 1 + 1/\beta_j$ and used $\log(1+x) < x$ for $x > 0$. We have obtained the desired bound, concluding the proof. 

Now, to show Corollary \ref{cor:KL}, just note that 
 by the invariance of $D$ under one-to-one transformations, we have that for any full rank matrix $X$,
\begin{align}
D \bigg\{ \mc{LD}(\beta) \dmid \mc{N}(\mu, \Sigma) \bigg\} = D \bigg\{ \mc{P}_X(\cdot;\beta) \dmid \mc{N}(X \mu, X^{\T} \Sigma X) \bigg\}. \label{eq:kltrans}
\end{align}
So 
\begin{align}
\inf_{\mu, \Sigma} \bigg\{ \mc{LD}(\beta) \dmid \mc{N}(\mu, \Sigma) \bigg\} = \inf_{\widetilde{\mu}, \widetilde{\Sigma}} D \bigg\{ \mc{P}_X(\cdot;\beta) \dmid \mc{N}(\widetilde{\mu}, \widetilde{\Sigma}) \bigg\}. \label{eq:klinf}
\end{align}
Since the infimum on the left side in (\ref{eq:klinf}) is attained by $\mu^*, \Sigma^*$, we have by (\ref{eq:kltrans}) that
$$  D\left( \mc P_X(\cdot;\beta) \dmid \mc N(\cdot;X \mu^*,X^T \Sigma^* X) \right) = \inf_{\mu, \Sigma} D\left( \mc P_X(\cdot;\beta) \dmid \mc N(\cdot;\mu,\Sigma) \right), $$
which gives Corollary \ref{cor:KL}.
\bibliographystyle{plainnat}
\bibliography{dynormal_arXiv}


%
%
%

\end{document}